\newtheorem{theorem}{Theorem}
\newtheorem{lemma}{Lemma}[section]
\newtheorem{proposition}[lemma]{Proposition}
\newtheorem{claim}[lemma]{Claim}
\newtheorem{definition}{Definition}
\newtheorem{remark}[lemma]{Remark}
\newtheorem*{claim*}{Claim}
\renewcommand{\qed}{\nobreak \ifvmode \relax \else
      \ifdim\lastskip<1.5em \hskip-\lastskip
      \hskip1.5em plus0em minus0.5em \fi \nobreak
      \vrule height0.75em width0.5em depth0.25em\fi}
\newcommand{\etal}{{\it et al.\,}}
\newcommand{\eps}{\epsilon}
\def\poly{\mathop{\rm{poly}}\nolimits}
\newcommand{\Ex}{\mathbf{E}}
\newcommand{\FG}{\ensuremath{\mathcal{G}}}
\newcommand{\card}[1]{\left\vert{#1}\right\vert}
\newcommand{\event}{\mathcal{E}}
\newcommand{\rs}{Ruzsa-Szemer\'{e}di\xspace}
\newcommand{\ourinfo}[1]{Department of Computer and Information Science, University of Pennsylvania. Email: \texttt{#1}.}
\renewenvironment{proof}{\paragraph{Proof.}}{\hfill$\qed$}
\newcommand{\toShrink}{-.25cm} 
\newcommand{\toShrinkEnu}{-.35cm} 
\newcommand{\toShrinkEqn}{-.45cm}
\newcommand{\IR}{\ensuremath{\mathbb{R}}}
\newcommand{\opt}{\ensuremath{\mbox{\tt opt}}}
\newcommand{\est}[1]{\tilde{#1}}
\newcommand{\ceil}[1]{{\left\lceil{#1}\right\rceil}}
\newcommand{\floor}[1]{{\left\lfloor{#1}\right\rfloor}}
\newcommand{\set}[1]{\ensuremath{\{ #1 \}}}
\renewcommand{\card}[1]{{\mid #1 \mid}}
\newcommand{\polylog}{\mbox{\rm  polylog}}
\newcommand{\REM}[1]{}
\newcommand{\Ot}{\tilde{O}}
\newcommand{\Omgt}{\tilde{\Omega}}
\renewcommand{\card}[1]{\left\vert{#1}\right\vert}
\newcolumntype{?}{!{\vrule width 1pt}}
\newcommand{\textbox}[2]{
{
\begin{center}
\begin{tabular}{?p{\linewidth}?}
\hline 
\vspace{0.1cm}
\textbf{#1}
\\
\vspace{0.2cm}
\vspace{-0.2cm}
#2
\vspace{-0.5cm}
\\\hline
\end{tabular}
\end{center}
}
}
\newcommand{\eat}[1]{}
\newcommand{\lzSampler}{$\ell_0$-sampler\xspace}
\newcommand{\lzSamplers}{$\ell_0$-samplers\xspace}
\renewcommand{\Pr}{\mbox{\textnormal{Pr}}}
\renewcommand{\Ex}{{\mbox{\textnormal{E}}}}
\newcommand{\estopt}{\ensuremath{\est{\opt}}}
\newcommand{\algUpper}{1\xspace}
\newcommand{\PS}[1]{\ensuremath{P^{(#1)}}}
\renewcommand{\FG}{\mathcal{G}}
\newcommand{\FR}{\mathcal{R}}
\newcommand{\FL}{\mathcal{L}}
\newcommand{\FM}{\mathcal{M}}
\title{Tight Bounds for Linear Sketches of Approximate Matchings} 
\author{Sepehr Assadi\thanks{\ourinfo{sassadi@cis.upenn.edu}}\and 
Sanjeev Khanna\thanks{\ourinfo{sanjeev@cis.upenn.edu}} \and 
Yang Li\thanks{\ourinfo{yangli2@cis.upenn.edu}} \and 
Grigory Yaroslavtsev\thanks{Warren Center for Network and Data Sciences, University of Pennsylvania. Email: \texttt{grigory@grigory.us}. This work was done while the author was supported by a postdoctoral fellowship at the Warren Center for Network and Data Science.}
}
\date{}
\begin{document}

\maketitle
\thispagestyle{empty}
\abstract{We resolve the space complexity of linear sketches for approximating the maximum matching
problem in dynamic graph streams where the stream may include both edge insertion and deletion.
Specifically, we show that for any $\eps > 0$, there exists a one-pass streaming algorithm, which
only maintains a linear sketch of size $\Ot(n^{2-3\epsilon})$ bits and recovers an
$n^\epsilon$-approximate maximum matching in dynamic graph streams, where $n$ is the number of
vertices in the graph.  In contrast to the extensively studied insertion-only model, to the best of
our knowledge, no non-trivial single-pass streaming algorithms were previously known for approximating
the maximum matching problem on general dynamic graph streams.
	
Furthermore, we show that our upper bound is essentially tight. Namely, any linear sketch for
approximating the maximum matching to within a factor of $O(n^\eps)$ has to be of size
$n^{2-3\epsilon -o(1)}$ bits. We establish this lower bound by analyzing the corresponding
simultaneous number-in-hand communication model, with a combinatorial construction based on \rs
graphs.  }

\clearpage
\setcounter{page}{1}
\section{Introduction}\label{sec:intro}
Massive datasets routinely arise in various application domains such as web-scale graphs and social
networks. The space requirement for performing computations on these massive datasets can easily
become prohibitively large. A common way of managing the space requirement is to consider 
algorithms in the \emph{streaming model} of computation.  In this model, formally introduced in the
seminal work of~\cite{AlonMS96}, an algorithm is allowed to make a single or a few passes over the
input while using space much smaller than the input size. We refer the reader to~\cite{Muth05} for a
survey of classical results in this model.

In recent years, there has been extensive work on design of streaming algorithms for various
graph problems, including connectivity, minimum spanning trees, spanners, sparsifiers, matchings,
etc. (see the survey by McGregor~\cite{M14} for a summary of these results).  Two types of graph
streams are mainly studied in the literature: in the \emph{insertion-only model}, the stream
contains only edge insertion, and in the \emph{dynamic model}, the stream contains both edge
insertion and deletion. The focus of this paper is the dynamic model.  The input in this model,
called \emph{dynamic graph streams}, can be defined formally as follows.

\begin{definition}[\!\!\cite{AGM12}]
  A \emph{dynamic graph stream} $S = \langle a_1, a_2, \dots ,a_t\rangle$ defines a multi-graph
  $G(V,E)$ on $n$ vertices $V=[n]$. Each $a_k$ is a triple $(i_k,j_k,\Delta_k)$ where $i_k,j_k \in
  [n]$ and $\Delta_k \in \set{-1,+1}$.  The multiplicity of an edge $(i,j)$ is defined to be:
  \[
  A(i,j) = \sum_{a_k: i_k = i~\wedge~j_k = j} \Delta_k
  \]
  The multiplicity of every edge is required to be always non-negative.
\end{definition}

The streaming model where the frequency of every entry is always non-negative is standard for graph
problems, and this model is generally referred to as the \textit{strict turnstile} model in the
literature (as opposed to the \emph{turnstile} model, which allows negative frequencies also). In this
paper, we study the \emph{maximum matching} problem for \emph{dynamic graph streams} in which the
algorithm is only allowed to make a single pass over the stream.

Matchings have received a lot of attention in the graph stream
literature~\cite{M05,FKMSZ05,EKS09,ELMS11,GoelKK12,KMM12,Z12,AGM12,AG13,K13,GO13,KKS14,CS14,EHLMO15}.
We briefly summarize the previous results for \emph{adversarially ordered} streams. A weaker notion
of \textit{randomly ordered} streams (which is less relevant to our work) is also often considered;
for results in this model, we refer the reader to~\cite{KMM12,KKS14} and references therein.

For the problem of recovering a maximum matching in bipartite graphs, a trivial lower bound on the
space complexity of any streaming algorithm is $\Omega(n)$, which is required for just storing the
matching edges.  Therefore, this problem is usually studied in the \emph{semi-streaming model}
(originally introduced by Feigenbaum~\etal~\cite{FKMSZ05}), where the algorithm is allowed to use
$O(n \cdot \polylog{(n)})$ bits of space. Moreover, no exact algorithm that uses $o(n^2)$ space can
exist~\cite{FKMSZ05}. This motivates the study of $\alpha$-approximate algorithms that output a
matching of size within a multiplicative factor $\alpha$ of the optimum.  For single-pass
semi-streaming algorithms in the insertion-only model, the best known approximation factor is $2$,
which is obtained by simply maintaining a maximal matching during the stream.  On the negative side,
it is shown by~\cite{GoelKK12,K13} that any streaming algorithm that achieves an approximation
factor of better than $e/(e - 1)$ requires the storage of $n^{1+\Omega(1/\log{\log{n}})}$ bits.  For
dynamic graph streams, to the best of our knowledge, no non-trivial single-pass streaming algorithm
using space $o(n^2)$ was known. Resolving the space complexity of matchings in single-pass dynamic
graph streams has been posed as an open problem at the Bertinoro workshop on sublinear and streaming
algorithms in 2014~\cite{TurnstileMatchingOP}.

For the problem of estimating the size of a maximum matching, a strongly sublinear $o(n)$ space
regime has been considered. In the single-pass insertion-only model, when edges arrive in an
adversarial order, the only known positive result for estimating the matching size is that
of~\cite{EHLMO15} which showed that a constant factor approximation is possible in $\Ot(n^{2/3})$
space under the assumption that the underlying graph is planar. The same paper~\cite{EHLMO15} also
provides a lower bound of $\Omega(\sqrt{n})$ (resp. $\Omega(n)$) bits of space for randomized
(resp. deterministic) algorithms that approximate the matching size in bipartite graphs to within a
factor of $3/2$. For the state of the art in the streaming model which allows multiple passes over
the stream, we refer the reader to~\cite{AGM12,AG13,GO13,AhnG13,K13} and references therein.

To the best of our knowledge, the only result concerning matchings in the single-pass dynamic graph
streams is the recent paper by Chitnis~\etal~\cite{CCHM15}, which provides an algorithm for
computing a maximal matching of size $k$ using $\Ot(nk)$ space. For multi-pass dynamic graph
streams, \cite{AhnG13} provides a $(1-\eps)$-approximation scheme for the weighted non-bipartite
matching problem using $O(p/\eps)$ passes with $O(n^{1+1/p})$ space (see also~\cite{M14}).

Finally, closely related to our work is a recent line of work on communication complexity of
approximate matchings in the multi-party setting~\cite{DNO14,ANRW15,HRVZ15}.  The one that is
closest to ours is~\cite{HRVZ15}, which shows a tight bound of
$\Theta\left(\frac{nk}{\alpha^2}\right)$ on the total communication required to compute an
$\alpha$-approximate matching for bipartite graphs, in the $k$-party message passing model where the
edges of the input graph are arbitrarily partitioned between the players.

\paragraph{Linear sketches.}
One of the most powerful techniques for designing streaming algorithms is \textit{linear sketching}.
Let $n$ be the number of vertices in the input graph.  Then edge multiplicities can be treated as a
vector $f \in \mathbb R^{\binom{n}{2}}$ with entries $f_e$.  Let $A \in \mathbb R^{d \times
  \binom{n}{2}}$ be a (possibly randomly chosen) matrix.  Then $A\cdot f$ is referred to as a
\textit{linear sketch} of the input stream. If all that a streaming algorithm maintains is such a
linear sketch, then the space requirement of the algorithm is proportional to $d$.  On any incoming
update $(i_k,j_k,\Delta_k)$, the linear sketch will be updated to $A\cdot f' = A\cdot f + \Delta_k
\cdot A \cdot \mathbf{1}_{(i_k,j_k)}$ where $f'$ is the new vector of edge multiplicities and
$\mathbf{1}_{(i_k,j_k)} \in \IR^{\binom{n}{2}}$ is a unit vector whose only non-zero entry is the
$(i_k,j_k)$ entry.  At the end of the stream, the algorithm can apply an \emph{arbitrary function}
to the linear sketch to compute the final answer.

Linear sketching is the only existing technique for designing streaming algorithms in the turnstile
model and even for dynamic graph streams\footnote{To the best of our knowledge the only exception is
  the recent paper~\cite{CCHM15}, which considers a \emph{promised problem} in dynamic graph
  streams.  However, it is worth mentioning that for the non-promise version of the problem, the
  algorithm given in the same work can again be viewed as a linear sketching algorithm.}.  Linear
sketches are also one of the main techniques for designing mergeable summaries~\cite{ACHPWY13} used
in distributed computing.  These facts have made linear sketches a computational model of their own.
Multiple results are known about the power and limitations of linear sketches,
e.g.~\cite{AhnGM12Linear,AGM12,ANPW13,HW13,KLMMS14}.  In fact, it is shown that any \emph{one-pass
  turnstile streaming algorithm} can be implemented by maintaining only a linear sketch of the input
during the stream~\cite{LNW14}\footnote{We emphasize that the result in~\cite{LNW14} is proven for
  the turnstile model rather than the \emph{strict} turnstile model.}. For an in-depth introduction
of linear sketching and its applications for dynamic streams and distributed computing, we refer the
reader to recent surveys by McGregor~\cite{M14} (graph streams) and Woodruff~\cite{W14}
(computational linear algebra).

\subsection{Our results} We resolve the space complexity of linear sketches for approximating maximum matchings by proving
tight upper and lower bounds on the space requirement.  For the upper bound, we establish the
following theorem.

\begin{theorem}\label{thm:upper-informal}
  There is a single-pass randomized streaming algorithm that takes as input a
  parameter $0 < \eps <1/2$ and a bipartite graph $G$ with $n$
  vertices, specified by a dynamic graph stream, uses $\Ot(n^{2-3\eps})$ bits of space, and
  outputs a matching of size $\Omega(\opt / n^\epsilon)$ with high probability, where $\opt$ is the size
  of a maximum matching in $G$. Moreover, the algorithm only maintains
  a linear sketch during the stream.
\end{theorem}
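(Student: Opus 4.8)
The plan is to combine a guessing step for the unknown optimum with a vertex-subsampling-plus-recovery linear sketch, together with an auxiliary low-degree sampler to handle dense inputs. First I would dispense with not knowing $\opt$: run $O(\log n)$ independent copies of the sketch below in parallel, the $\ell$-th one tuned to the hypothesis $\opt \in [2^\ell, 2^{\ell+1})$, and at the end output the largest matching returned by any copy. It then suffices that the single copy whose guess $D$ satisfies $D \le \opt < 2D$ returns a matching of size $\Omega(\opt/n^\eps)$, and the $O(\log n)$ blow-up is absorbed into the $\Ot(\cdot)$.

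For a fixed guess $D$, I would build the sketch from two standard linear-sketch primitives for dynamic graph streams~\cite{AGM12}. Fix a subsampling rate $p=p(D,\eps)$. (i) Choose a random vertex subset $S$ by keeping each vertex independently with probability $p$ (say one side of the bipartition, which is enough since the optimum matching then shrinks only by a factor $p$); zeroing out all coordinates $f_e$ with an endpoint outside $S$ is a linear operation, and onto the restricted vector I run a sparse-recovery sketch with sparsity budget $k=k(D,\eps)$, so that if $G[S]$ has at most $k$ edges it is recovered exactly. (ii) In addition, for each vertex of $S$ maintain $\polylog(n)$ $\ell_0$-samplers of its incident edges at $O(\log n)$ nested random neighborhood scales (the AGM connectivity primitive), which lets the algorithm recover, at the appropriate scale, a bounded-degree subgraph of $G[S]$. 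At the end of the stream the algorithm computes a maximum matching in whichever recovered subgraph is available and returns it. Everything here is a composition of linear maps, so the final object is a linear sketch of size $\Ot(k + p n)$ per copy.

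The correctness analysis splits into a ``matching survives'' part and a ``sketch is big enough'' part. Fix a maximum matching $M$ of $G$ once and for all. Each of its $\opt$ edges keeps its relevant endpoint(s) in $S$ with probability $\Theta(p)$ (or $\Theta(p^2)$ if both sides are sampled), and since $M$ is a set of vertex-disjoint edges these events are negatively associated, so a Chernoff bound gives $\nu(G[S]) = \Omega(p\cdot\opt)$ with high probability whenever $p\cdot\opt = \Omega(\log n)$; I would pick $p \asymp n^{-\eps}$ so that the surviving subgraph already contains a matching of the target size. For the space side, if $G[S]$ is not too dense then part (i) recovers it outright; here one uses the Erd\H{o}s--Gallai/K\"onig bound that a graph with matching number $\mu$ has $O(\mu\cdot(\#\text{vertices}))$ edges to bound $|E(G[S])|$, and in the complementary dense-and-large-matching regime one instead argues via part (ii) that a bounded-degree random subgraph of $G[S]$ already carries a matching of size $\Omega(\nu(G[S]))$, which is recovered cheaply. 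Calibrating $p$, $k$, and the number of scales against these constraints is what pins the per-copy cost, and taking the maximum over all guesses $D\le n$ together with a union bound over the $O(\log n)$ copies gives the claimed $\Ot(n^{2-3\eps})$ bound with high probability.

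I expect the crux to be exactly the dense-and-large-matching regime. A naive ``subsample and recover all of $G[S]$'' scheme forces $k = \Theta(|S|^2)$ there and only yields $\Ot(n^{2-\eps})$ space, so the improvement to $\Ot(n^{2-3\eps})$ hinges on never recovering $G[S]$ in full but only a sparse matching skeleton of it, and on showing that such a skeleton --- obtained from the nested $\ell_0$-samplers --- provably still certifies a matching of size $\Omega(\opt/n^\eps)$ on every input, dense or sparse. Getting the two mechanisms to share a single $\Ot(n^{2-3\eps})$-bit budget, and verifying that at least one of them fires with the right guarantee regardless of the structure of $G$ --- in particular on the \rs-like graphs that drive the matching lower bound --- is where essentially all the real work lies.
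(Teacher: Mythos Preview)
Your guessing-of-$\opt$ step matches the paper, but the core sketch you propose is quite different and, as described, does not reach $\Ot(n^{2-3\eps})$. The paper uses neither vertex subsampling nor sparse recovery. Instead it hashes each side into $\alpha\approx\opt/n^{\eps}$ groups, and for every left group $L_i$ it draws $\beta\approx(\opt/n^{2\eps})\log n$ random right groups as ``active partners'', maintaining one $\ell_0$-sampler over the edges of each active pair $(L_i,R_j)$. A balls-and-bins argument then shows that $\Omega(\alpha)$ active pairs are both matchable (share an edge of a fixed maximum matching $M^*$) and have pairwise distinct endpoints on both sides; the edges returned by those samplers already form a matching of size $\Omega(\opt/n^{\eps})$. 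The space is $\alpha\beta\cdot\polylog(n)=\Ot(\opt^{2}/n^{3\eps})\le\Ot(n^{2-3\eps})$. The crucial feature is that each sampler ranges over a \emph{group-to-group} block of edges, not over the star of a single vertex.

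The gap in your plan is exactly the dense regime you flag, and the two mechanisms you list do not cover it. Consider $\opt=n$: take a perfect matching $M$ together with a complete bipartite graph from all of $L$ to a ``hub'' $R'\subset R$ of size $|R'|\approx\sqrt{|S|\cdot q}$, where $|S|=pn$ and $q$ is the number of $\ell_0$-samples you keep per vertex of $S$. Then $|E(G[S,R])|\ge |S|\cdot|R'|$, which for any choice with $|S|\cdot q\le n^{2-3\eps}$ and $p\ge n^{-\eps}$ already exceeds $n^{2-3\eps}$, so mechanism~(i) does not fire. For mechanism~(ii), each $u\in S$ has $|R'|+1$ neighbours, so its $M$-edge is among the $q$ samples with probability $\approx q/|R'|$, and otherwise the sample lands in $R'$; the recoverable matching therefore has size at most $|R'|+|S|\,q/|R'|\approx 2\sqrt{|S|\,q}\le 2n^{1-3\eps/2}$, an $n^{\eps/2}$ factor short of the target $n^{1-\eps}$. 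Nesting $R$ at geometric rates does not help, since at every scale either the hub still dominates $u$'s neighbourhood or the $M$-partner is present only with probability $\le 1/|R'|$. In other words, per-vertex $\ell_0$-sampling within an $n^{2-3\eps}$ budget is intrinsically stuck at an $n^{3\eps/2}$-approximation on hub instances (equivalently, it needs $\Ot(n^{2-2\eps})$ space for an $n^{\eps}$-approximation, which is precisely the bound of the concurrent work~\cite{Konrad15}). The extra $n^{\eps}$ saving in the paper comes from grouping both sides and sampling group \emph{pairs}, which dilutes any small hub across many right groups so that a single $\ell_0$-sample per pair suffices.
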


We prove this result by designing a \emph{sampling based} algorithm that takes advantage of the
well-known linear sketching implementation of \lzSampler (see Section~\ref{sec:lzSampler}).  The
algorithm maintains a set of (edge) samplers that are coordinated in such a way that the sampled
edges are ``well-spread'' across different parts of the graph and hence contain a relatively large
matching. The main challenge is to achieve such a coordination for linear sketching based
samplers. Such a coordination is typically achieved via sequential operations that depend on the
state of the stream, while linear sketches are inherently oblivious to the underlying state.

Note that our algorithm, though stated for bipartite graphs, also works for general graphs by
applying the standard technique of choosing a random bipartition of the vertices upfront and only
considering edges that cross the bipartition, while losing a factor of $2$ in the approximation
ratio.  We further note that for weighted graphs with $\poly(n)$-bounded weights, the standard
``grouping by weight'' technique can be used to obtain a similar result for computing an
approximation to weighted matching, while losing a factor of $O(\log{n})$ in the approximation
ratio.

We complement our upper bound by the following (essentially) matching lower bound.

\begin{theorem}\label{thm:lower-informal}
  There exists a constant $ c > 0$, such that for any $\eps > 0$, any randomized
  \emph{linear sketch} that can be used to recover a matching of size $\opt/(c \cdot n^{\eps})$
  for every input bipartite graph $G$ on $n$ vertices with constant probability, must have
  worst case space complexity of $n^{2-3\eps-o(1)}$ bits. Here, $\opt$ denotes the size of a maximum matching in $G$.
\end{theorem}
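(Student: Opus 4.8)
The plan is to prove the lower bound in the \emph{simultaneous number-in-hand} communication model and then transfer it to linear sketches; the heart of the argument is a combinatorial construction based on \rs graphs.

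\textbf{Step 1 (linear sketches $\Rightarrow$ simultaneous communication).} A linear sketch is oblivious and additive, which is exactly what the simultaneous model needs. If the edge-updates of a (dynamic) stream are partitioned among $k$ players so that player $p$ receives a subset of the triples with signed incidence sum $f_p$, then the final sketch of the whole stream is $Af = A(\sum_p f_p) = \sum_p Af_p$. Thus a linear sketch of size $s$ yields a $k$-player protocol in which player $p$ sends $Af_p$ (after the routine reduction bounding each sketch entry to $O(\log n)$ bits, e.g.\ by reducing modulo a random prime, which is valid since the relevant multiplicities are $\poly(n)$-bounded) and the referee adds the messages and runs the recovery map. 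Crucially, the players can implement edge \emph{deletions} as well as insertions, so this reduction inherits the full power of the turnstile model. It therefore suffices to exhibit a distribution over edge-partitioned dynamic instances on which every simultaneous protocol that recovers an $O(n^{\eps})$-approximate matching with constant probability has per-player communication $n^{2-3\eps-o(1)}$.

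\textbf{Step 2 (hard instances from \rs graphs).} Take the input to be a vertex-disjoint union of roughly $n^{3\eps}$ bipartite ``gadgets'', each on roughly $n^{1-3\eps}$ vertices; a gadget is a bipartite dense \rs graph, i.e.\ the edge-disjoint union of $r=n^{\Omega(1)}$ induced matchings each of size $w=(\text{gadget size})^{1-o(1)}$, so each gadget has $(\text{gadget size})^{2-o(1)} = n^{2-6\eps-o(1)}$ edges, the whole instance has $\opt = n^{1-o(1)}$, and a matching of size $\opt/(c\,n^{\eps})$ must induce, in a constant fraction of gadgets, a sub-matching of size $\Omega(w/n^{\eps})$. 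The players first insert the entire \rs graph of every gadget; the turnstile adversary then issues deletions that, inside each gadget, destroy all but one of the $r$ induced matchings (and pad what is left with a structured ``garbage'' subgraph of small matching number, so that the surviving matching cannot simply be read off the remaining edges). The reason \rs graphs are the right tool is that an induced matching cannot be ``rerouted'': the only edges spanned by its vertices are its own, so any sizeable matching on those vertices is literally a sub-matching of that particular induced matching --- locating a large matching there is equivalent to identifying which of the $r$ candidates survived. Since the sketch matrix is chosen before the deletions, it cannot be tuned to the surviving structure.

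\textbf{Step 3 (the communication lower bound).} By Yao's principle it suffices to fix the public coins (hence the sketch matrix $A$) and lower-bound the deterministic cost against this distribution. The post-deletion contents of a gadget form a super-polynomially large family of ``incompressible'' configurations, any two of which have only a small common matching because distinct induced matchings of an \rs graph are edge-disjoint; by additivity of the sketch, two instances whose partial sketches agree force the referee to output the \emph{same} matching on both, a contradiction if their surviving structures are incompatible. A counting/fooling-set argument --- or, packaged differently, a direct-sum/information-complexity argument over the $\approx n^{3\eps}$ independent gadgets, each contributing a term of order $n^{2-6\eps-o(1)}$ (number of induced matchings times their size in the dense \rs construction) --- then forces the partial sketches to have total size $n^{3\eps}\cdot n^{2-6\eps-o(1)} = n^{2-3\eps-o(1)}$, giving $s = n^{2-3\eps-o(1)}$.

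\textbf{Main obstacle.} The real work is the combinatorial design in Step 2. One must calibrate the two side-sizes of the gadgets, the number $r$ and size $w$ of the induced matchings, the ``garbage'' attachments and the number of players so that (i) $\opt$ stays large enough that an $O(n^{\eps})$-approximation is still $\omega(1)$; (ii) extracting such a matching provably requires separating super-polynomially many per-gadget configurations despite the obliviousness of the sketch; and (iii) all the exponents compose to \emph{exactly} $2-3\eps$, matching Theorem~\ref{thm:upper-informal}. Requirement (iii) forces the use of the strongest known dense \rs constructions, which is precisely where the unavoidable $n^{-o(1)}$ slack in the bound enters; and (ii) is where the edge-disjoint-induced-matching structure is indispensable, since it is what turns an otherwise ``cheap to describe one matching'' situation into a ``must essentially remember the whole dense gadget'' situation against an oblivious, turnstile adversary.
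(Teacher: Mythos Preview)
Your Step~1 (linear sketches $\Rightarrow$ simultaneous protocols) is correct and matches the paper's Proposition~\ref{prop:reduction}. Your Steps~2--3, however, diverge from the paper's construction and contain a genuine gap.

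The paper does \emph{not} use vertex-disjoint gadgets or turnstile deletions. Its hard distribution has $k=n^{\eps+o(1)}$ players; player $\PS{i}$ receives an $(r,t)$-RS graph on $N=n/k$ vertices with a uniformly random half of each induced matching dropped. One matching $M_\lambda$ (with $\lambda$ uniform in $[t]$) is secretly designated ``private'': its $2r$ endpoints receive globally unique labels, while all remaining vertices are \emph{shared} across the $k$ players via a random permutation that hides $\lambda$ from $\PS{i}$. The global optimum is $\Theta(kr)$ because the $k$ private matchings are vertex-disjoint, whereas any matching touching few private vertices has size $\le 2N$ and is only an $O(\alpha)$-approximation (Claim~\ref{clm:apx-factor}). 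The lower bound is then a counting argument (Lemmas~\ref{lem:F-size}--\ref{lem:trivial}): if $\PS{i}$'s message has at most $rt/(5\alpha\log n)$ bits, the intersection of all inputs consistent with that message retains $\ge 2^\beta r/\alpha$ edges in at most a $t/(2^{\beta+2}\log n)$ fraction of the $t$ matchings; since $\lambda$ is uniform and unknown to $\PS{i}$, the coordinator recovers only $O(r/\alpha)$ private edges per player in expectation. The random edge-dropping is what makes each matching individually incompressible; the random relabeling is what hides $\lambda$.

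Your construction lacks both mechanisms, and this is fatal in the simultaneous model. The input is \emph{partitioned} among the players, so some player must hold the ``delete all but $M_\lambda$'' updates for a given gadget --- and that player then knows the surviving matching and can transmit it in $O(w\log n)\ll n^{2-3\eps}$ bits, defeating the lower bound. The ``garbage'' padding cannot help, since by your own stipulation it has small matching number and so cannot mask a size-$w$ induced matching. Relatedly, your Step~3 arithmetic produces only a \emph{total} of $n^{3\eps}\cdot n^{2-6\eps-o(1)}=n^{2-3\eps-o(1)}$ over all gadgets, not a per-player (i.e., per-sketch) bound; without an information-hiding device that forces \emph{each} player to describe essentially all $t$ matchings of its graph, there is no reason any single message must be that large. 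The paper's shared-vertex relabeling plus random half-dropping is precisely that device.
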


This result is obtained as a corollary of our lower bound on the communication complexity of
approximating maximum matchings in the \emph{number-in-hand simultaneous model}
(Theorem~\ref{thm:lower}); see Section~\ref{sec:distributed-computation} for the exact definition of
this model and the connection with linear sketches.

Our construction follows the line of work by~\cite{GoelKK12,K13} on using \rs graphs for proving
lower bound on space complexity of streaming algorithms for maximum matching problem.  However,
focusing on the number-in-hand simultaneous model allows us to benefit from different construction
of \rs graphs that are dense, hence bypassing the limitation of the aforementioned works on proving
lower bound for \emph{larger} approximation ratios and the $n^{1+\Omega(1/\log{\log{n}})}$ barrier
on the value of the space lower bound. We elaborate more on this in Section~\ref{sec:rs}.

Finally, we note that Theorem~\ref{thm:upper-informal} and Theorem~\ref{thm:lower-informal} provide
(essentially) tight bounds on the space complexity of any streaming algorithm for dynamic graph
streams that only maintains a linear sketch during the stream.  This makes progress on an open
problem posed at the Bertinoro workshop on sublinear and streaming algorithms in
2014~\cite{TurnstileMatchingOP}, regarding to the possibility of having constant factor
approximation to the maximum matching in $o(n^2)$ space.

\paragraph{Recent related work.} Independently and concurrently to our work, Konrad~\cite{Konrad15}
has also studied the problem of designing linear sketches for approximating matchings in dynamic
graph streams. Konrad's work shows that an $n^{\eps}$-approximation can be obtained using a linear
sketch of size $\Ot(n^{2-2\eps})$, and it establishes a lower bound of $\Omega(n^{3/2 - 4\eps})$ on
the size of any linear sketch that yields an $n^{\eps}$-approximation. Our approaches for
establishing the lower bound on the sketch size are in the same spirit, though the techniques and
constructions are quite different.

\subsection{Organization} 
In Section~\ref{sec:prelim}, we introduce the key concepts and tools used in this paper. In
particular, Section~\ref{sec:lzSampler} describes \lzSamplers and how we use them in our algorithm;
Section~\ref{sec:distributed-computation} formally defines the number-in-hand simultaneous model and
how it is connected to linear sketches; and Section~\ref{sec:rs} provides a definition of \rs graphs
and the specific construction used in our lower bound construction.  In Section~\ref{sec:upper}, we
describe a single-pass streaming algorithm for the maximum matching problem in dynamic graph streams
and prove Theorem~\ref{thm:upper-informal}. In Section~\ref{sec:lower}, we present our lower bound
construction and Theorem~\ref{thm:lower-informal}. Finally, we conclude our results in
Section~\ref{sec:conc}.

\section{Preliminaries}\label{sec:prelim}
\subsection{$\ell_0$-Samplers} \label{sec:lzSampler}
We use the following tool developed in the streaming literature.
\begin{definition}[$\ell_0$-sampler~\cite{FrahlingIS2008}]
  Let $ 0 < \delta < 1$ be a parameter.  An $\ell_0$-sampler is an algorithm which given access to a
  dynamic stream, returns FAIL with probability at most $\delta$, and otherwise, outputs an element
  $e$, along with the frequency $f_e$, where $e$ is uniformly distributed among the \emph{non-zero}
  entries of the frequency vector $f$.
\end{definition}

We use \lzSamplers as follows: For the input graph $G(V,E)$, let $V' \subseteq V$ be a subset of
vertices; suppose we maintain an \lzSampler over the stream where only the edges between vertices in $V'$ are
considered.  At the end of the stream, we can use the \lzSampler to recover one edge between the
vertices in $V'$, if such an edge exists.

We use the following lemma in our algorithm which implements \lzSamplers using linear sketches.

\begin{lemma}[\!\!\cite{JowhariST2011}]\label{lem:lz-sampler}
  For any $ 0 < \delta < 1$, there is a linear sketching implementation of $\ell_0$-sampler for the
  frequency vector $f \in \mathbb R^n$ with probability of success $1-\delta$, using $O(\log^{2}{n}
  \cdot \log{(\delta^{-1})})$ bits of space.
\end{lemma}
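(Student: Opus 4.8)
The plan is to reconstruct (a version of) the $\ell_0$-sampler of~\cite{JowhariST2011}, whose two ingredients are an \emph{exact $1$-sparse recovery sketch} and \emph{geometric subsampling}. Both ingredients are linear functions of $f$; the only non-linear operation is the final decoding, which is applied once at the end of the stream, and is therefore allowed in the linear sketching model. Throughout we use that the frequencies occurring in our application are integers bounded by $\poly(n)$, so each scalar counter we maintain uses $O(\log n)$ bits.

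\textbf{$1$-sparse recovery.} First I would handle the promised case in which $f$ has at most one non-zero coordinate. Maintain the integer linear quantities $a=\sum_{i} f_i$ and $b=\sum_{i} i\cdot f_i$, together with a fingerprint $c=\sum_{i} z^{i} f_i \bmod p$, where $p=\poly(n)$ is a sufficiently large prime and $z\in\IF_p$ is uniformly random. If $a=0$ (in the strict turnstile regime this means $f=0$) or if $b/a$ is not an integer we declare failure to recover; otherwise we set the candidate index $e=b/a$, value $f_e=a$, and we \emph{accept} this pair only if $c\equiv a\cdot z^{e}\pmod p$. When $f$ is genuinely $1$-sparse the test always accepts and the output is correct; when $f$ has two or more non-zero coordinates, the polynomial $\sum_i z^i f_i - a z^{e}$ has degree less than $n$ and is not identically zero, so the test accepts with probability at most $n/p$, which we can make smaller than $n^{-C}$ for any constant $C$ by choosing $p$ large.

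\textbf{Subsampling and amplification.} To deal with a general $f$, pick a hash function $h:[n]\to\{0,1,\dots,\ceil{\log n}\}$ with $\Pr[h(i)\ge j]=2^{-j}$, realized from $O(\log n)$ independent bounded-independence bit-hashes (total description size $O(\log^2 n)$ bits), and for each level $j$ run the $1$-sparse recovery sketch of the previous paragraph on the restriction of $f$ to $\{i:h(i)\ge j\}$. Writing $s=\norm{f}_0$, at the level $j^\star$ with $2^{j^\star}=\Theta(s)$ the expected number of surviving non-zero coordinates is $\Theta(1)$, and a second-moment computation (for which bounded independence within a level suffices) shows that with constant probability \emph{exactly one} survives; in that event the accepting $1$-sparse sketch at that level outputs a coordinate that is \emph{uniform} over the non-zero support of $f$ (by symmetry of $h$ on that support), together with its frequency. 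Since $s$ is unknown, at the end we simply scan all $O(\log n)$ levels and return the pair from any level whose test accepts; by the previous paragraph, a false accept anywhere is ruled out except with probability $n^{-C}$, so conditioned on not failing the output is exactly uniform, as the definition requires. One such structure uses $O(\log^2 n)$ bits and succeeds with at least a constant probability; running $\Theta(\log(\delta^{-1}))$ independent copies and reporting the output of any successful one drives the failure probability below $\delta$, for a total of $O(\log^2 n\cdot\log(\delta^{-1}))$ bits. Finally, every maintained quantity — the sums, the index-weighted sums, and the polynomial fingerprints, each over a hash-defined subset of coordinates — is a fixed linear functional of $f$, so the whole data structure is a linear sketch $A\cdot f$ and updates additively on each turnstile update $(i_k,j_k,\Delta_k)$.

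\textbf{Main obstacle.} The delicate points are the two probabilistic guarantees, and in particular that they must hold in the stronger ``exact'' sense demanded by the $\ell_0$-sampler definition. First, the false-positive probability of the $1$-sparsity test must be made \emph{negligible} rather than merely small, since an accepting level has to be genuinely $1$-sparse for the output to be exactly uniform; this is what forces the fingerprint field to be polynomially large and requires a union bound over all $O(\log n\cdot\log(\delta^{-1}))$ levels and copies. Second, the subsampling analysis must be carried out with a hash family whose seed fits in $O(\log^2 n)$ bits, so one has to work with bounded independence rather than full independence while still lower-bounding the probability of the ``exactly one survivor'' event by an absolute constant; tracking those constants is the most technical part of the argument.
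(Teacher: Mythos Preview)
The paper does not prove this lemma at all: it is stated as a black-box result imported from~\cite{JowhariST2011}, with the citation appearing in the lemma header itself and no proof environment following it. So there is nothing in the paper to compare your argument against.

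That said, your reconstruction is essentially the standard Jowhari--Sa\u{g}lam--Tardos construction and is sound in outline: the $1$-sparse recovery with a polynomial-fingerprint verifier, the $O(\log n)$ geometric subsampling levels, and the $O(\log(\delta^{-1}))$-fold independent repetition together give the stated $O(\log^2 n\cdot\log(\delta^{-1}))$ bound, and every maintained quantity is indeed linear in $f$. One small point worth tightening: the phrase ``return the pair from any level whose test accepts'' is ambiguous and, read literally, could introduce bias when several levels accept simultaneously; the usual fix is to deterministically select, say, the highest accepting level, which preserves exact uniformity by the symmetry of the hash over the support. Your ``main obstacle'' paragraph already identifies the two genuine technical points (driving the false-accept probability to $n^{-C}$ and carrying out the exactly-one-survivor second-moment bound under bounded independence), and both are handled in~\cite{JowhariST2011}.
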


\subsection{The Number-in-Hand Simultaneous Model}\label{sec:distributed-computation}
The number-in-hand simultaneous model is defined as follows. The input vector $x = x_1 + \ldots +
x_k$ is partitioned \emph{adversarially} between $k$ different players $\PS{1},\ldots,\PS{k}$, where
each player $\PS{i}$ only sees the input $x_i$.  All players have access to an infinite shared
string of random bits, referred to as \emph{public coins}.  The goal for the players is to compute a
function $f(x)$ by \emph{simultaneously} sending a (possibly randomized using only public
randomness) message to a special party called the \emph{coordinator}, according to a pre-specified
\emph{protocol}. For any input $x$, the coordinator is then required to output $f(x)$ with
probability $1 - \delta$ over the randomness used in the protocol.  We refer the reader
to~\cite{CCbook} for more information about communication complexity in general.

To prove our lower bound in Theorem~\ref{thm:lower-informal}, we consider the maximum matching
problem in the number-in-hand simultaneous model, defined formally as follows. Each player $\PS{i}$
is given a vector $x_i \in \set{0,1}^{\binom{n}{2}}$, representing the edges of a graph
$G_i(V,E_i)$, with $V = [n]$. Their goal is to approximate the maximum matching in the multi-graph
$G(V,E)$, where $E$ is represented by the vector $x = x_1 + \ldots + x_k$.

We should note that space lower bounds for single-pass streaming algorithms are usually obtained by
proving communication complexity lower bounds in a different model of communication, i.e., the
\emph{one-way} communication model, in which player $\PS{1}$ speaks to $\PS{2}$, who speaks to
$\PS{3}$, etc., and finally $\PS{k}$ outputs the answer.  In this model, the maximum matching
problem has a simple $2$-approximation algorithm using $O(n)$ communication per player: send a
maximal matching from each player to the next one. Since we are looking for space complexity of
$n^{2-3\eps - o(1)}$, the one-way model cannot lead to our lower bound in
Theorem~\ref{thm:lower-informal}.

The following proposition enables us to consider the simultaneous model instead of one-way model in
proof of our space lower bound.  This reduction is well-known in the literature (see~\cite{LNW14},
for example).
\begin{proposition}\label{prop:reduction}
  Suppose there is a \emph{linear sketch} of size $s$ bits for a function $f$ from which $f$ can be
  computed with failure probability at most $\delta$; then for any $k \geq 1$, there exists a
  \emph{public-coin number-in-hand simultaneous} protocol for $k$ players to compute $f$, where each
  player communicates a message of size $s$ and the coordinator is able to compute $f$ with failure
  probability at most $\delta$.
\end{proposition}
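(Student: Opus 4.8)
The plan is to exploit the one feature that distinguishes a linear sketch from an arbitrary summary — its linearity — together with the fact that the number-in-hand simultaneous model provides public coins. Write the linear sketch as a (possibly randomly chosen) matrix $A \in \IR^{d \times \binom{n}{2}}$ together with a recovery function $g$, so that $A x$ is representable in $s$ bits and $g(A x) = f(x)$ with probability at least $1-\delta$ over the choice of $A$. The protocol is then essentially forced: using the shared random string, all $k$ players sample the \emph{same} matrix $A$ — this is exactly the step where the sketch's internal randomness is relocated onto the public coins — then player $\PS{i}$ locally computes $A x_i$ and sends it to the coordinator, who forms $\sum_{i=1}^{k} A x_i = A\bigl(\sum_{i=1}^{k} x_i\bigr) = A x$ by linearity and outputs $g(A x)$.

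Next I would check correctness, which is immediate: in this protocol the joint distribution of $(A, Ax)$ is identical to the one arising in the streaming algorithm, since the only randomness involved is the draw of $A$, whose marginal is the same whether it comes from private coins or from the public random string. Hence the coordinator outputs a correct value of $f$ with probability at least $1-\delta$, for every $k \geq 1$ (the case $k=1$ being trivial, and the reconstruction using no communication beyond the $k$ messages).

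The only point requiring care — and hence the main, if minor, obstacle — is the claim that each message has size exactly $s$ rather than something larger: a priori a partial sketch $A x_i$ could have larger coordinates, and so require more bits, than the full sketch $A x$. This is resolved by the standard conventions under which linear sketches (and our applications of them) are stated. In our setting the input vectors satisfy $x_i \in \set{0,1}^{\binom{n}{2}}$ and $x = \sum_i x_i$ is a bounded non-negative integer vector, so each coordinate of $A x_i$ has the same order of magnitude as the corresponding coordinate of $A x$; equivalently, one may assume — as in the linear-sketching implementation of \lzSamplers underlying Lemma~\ref{lem:lz-sampler} — that the sketch is computed over a finite field $\IF_p$ with $p = \poly(n)$, in which case $A x_i \in \IF_p^{d}$ has exactly the same bit-length as $A x$. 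In either case each player's message has length $s$ (up to the lower-order slack already built into our stated bounds), which completes the argument.
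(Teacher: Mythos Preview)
Your proof is correct and follows essentially the same approach as the paper: share the sketch matrix $A$ via public coins, have each player send $A x_i$, and let the coordinator sum and decode. Your additional paragraph addressing the bit-length of the partial sketches $A x_i$ is a thoughtful point that the paper's proof does not spell out, but it does not change the argument.
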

\begin{proof}
  The players use the public coins to construct the set of random coin tosses required to create the
  matrix $A$ in the linear sketch. Then, each player computes $A\cdot x_i$ and sends it to the
  coordinator. The coordinator can now compute $A\cdot x$ for $x = x_1 + \ldots + x_k$ by simply
  computing $A\cdot x = A\cdot x_1 + \ldots + A\cdot x_k$, and then compute $f(x)$ from $A\cdot x$.
\end{proof}

\subsection{\rs graphs}\label{sec:rs}

Given an undirected graph $G(V,E)$ and a set of edges $F \subseteq E$, we denote by $V(F)$, the set
of vertices which are incident on at least one edge in $F$. Moreover, we denote by $E(F)$, the set
of edges \emph{induced} by $F$, i.e. $E \cap (V(F) \times V(F))$. $F$ is said to be an \emph{induced
  matching} if no two edges in $F$ share an endpoint and $E(F) = F$.

\begin{definition}[\rs graph]\label{def:rs-graph}
  We call a graph $G$ an $(r,t)$-\emph{\rs graph}, $(r,t)$-RS graph for short, if the set of edges
  in $G$ consists of $t$ pairwise disjoint induced matchings $M_1,\ldots,M_t$, each of size $r$.
\end{definition}

In general, graphs of this type are of interest when $r$ and $t$ are relatively large as a function
of number of vertices in the graph.  The first construction of an $(r,t)$-RS graph was given by
Ruzsa and Szemer\'{e}di~\cite{RuszaS78} with parameters $r = \frac{n}{e^{O(\sqrt{n})}}$ and $t =
\frac{n}{3}$.  By now, there are several known construction of these graphs with different range of
parameters $r$ and $t$~\cite{AlonMS12,BirkLM93,FischerLNRRS02} (see~\cite{AlonMS12} for more
information).  In particular, Fischer \etal~\cite{FischerLNRRS02} introduced a construction with
parameters $r = (1-o(1))\cdot\frac{n}{3}$ and $t = n^{\Omega(1/\log{\log{n}})}$.  This construction
was further used and improved by~\cite{GoelKK12,K13} to obtain their aforementioned lower bound of
$n^{1+\Omega(1/\log{\log{n}})}$ on space complexity of streaming algorithms for maximum matching
problem in the insertion-only streams.

We use the construction of $(r,t)$-RS graphs given by Alon~\etal~\cite{AlonMS12}, which is
summarized in the following theorem.

\begin{theorem}[\!\!\cite{AlonMS12}]\label{thm:alon-rs}
  For any sufficiently large $N$, there exists an $(r,t)$-RS graph on $N$ vertices with $r =
  N^{1-o(1)}$ and $r\cdot t = {{N \choose 2}} - o(N^2)$.
\end{theorem}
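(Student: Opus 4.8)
The plan is to establish this following Alon, Moitra and Sudakov, via the classical bridge between ``induced-matching decompositions'' and dense subsets of the integers with no three-term arithmetic progression (3-AP). The first ingredient is Behrend's construction: for every $m$ there is a set $B\subseteq[m]$ containing no nontrivial 3-AP with $|B|\ge m\cdot\exp(-O(\sqrt{\log m}))=m^{1-o(1)}$; this set will index the induced matchings.

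The second ingredient is the base graph. On $X\cup Z$, where $X$ and $Z$ are copies of $\mathbb Z_m$ (with $m$ odd, so halving is well-defined), join $x\in X$ to $z\in Z$ whenever $z-x\in 2B:=\{2b:b\in B\}$. For each $y\in\mathbb Z_m$ set $M_y=\{\,(y-b,\,y+b):b\in B\,\}$. Each $M_y$ has $|B|$ edges, the family $\{M_y\}_{y}$ partitions the edge set (the edge $(x,z)$ lies in $M_{(x+z)/2}$), and each $M_y$ is \emph{induced} exactly because a chord $(y-b_1,y+b_2)$ with $b_1\neq b_2$ would force $b_1+b_2=2b_3$ for some $b_3\in B$, i.e.\ a nontrivial 3-AP $(b_1,b_3,b_2)$ in $B$. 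This already gives an $(r,t)$-RS graph with $r=|B|=m^{1-o(1)}$ and $t=m$ on $N=2m$ vertices --- but it has only $m|B|=m^{2-o(1)}=o(N^2)$ edges, so it is very far from the ``nearly complete'' graph the theorem asks for.

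The heart of the proof, and the step I expect to be the main obstacle, is amplifying the edge density to $(1-o(1))\binom N2$ while keeping the matching size $N^{1-o(1)}$. The obstruction is structural: any induced matching of size $r$ forces $\binom{2r}{2}-r$ non-edges among its $2r$ endpoints, so if the forbidden ``cross-pairs'' of the $t$ matchings occupied essentially disjoint portions of the vertex set, the total number of non-edges would greatly exceed $\binom N2$. The only way out is to engineer the construction so that the forbidden cross-pairs of the various matchings almost entirely coincide as subsets of $\binom{V}{2}$. Following Alon--Moitra--Sudakov, this is achieved through a more elaborate, ``product''/higher-dimensional version of the above construction --- roughly, a suitable power of the base graph built so that a given non-edge $\{u,v\}$ is simultaneously forbidden for essentially all matchings at once --- after which one checks that products of induced matchings remain induced and that $r$ and $t$ change only by $\exp(O(\sqrt{\log N}))$ factors, hence stay $N^{1-o(1)}$ and $N^{1+o(1)}$ respectively.

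Finally I would close with the parameter bookkeeping: take $N$ to be the target value (padding with isolated vertices to hit it exactly if needed), record $r=N^{1-o(1)}$, and --- since the $t$ induced matchings partition all but $o(N^2)$ of the edges of a graph with $(1-o(1))\binom N2$ edges --- conclude $r\cdot t=\binom N2-o(N^2)$. The naive shortcut of simply unioning many shifted copies of the sparse base graph fails precisely because it creates chords inside what should be induced matchings; making the arithmetic structure force all the forbidden cross-pairs to line up is the real content of the Alon--Moitra--Sudakov argument, and everything else is routine.
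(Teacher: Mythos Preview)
The paper does not prove this statement at all: Theorem~\ref{thm:alon-rs} is quoted verbatim from Alon, Moitra and Sudakov~\cite{AlonMS12} and used as a black box in the lower-bound construction of Section~\ref{sec:lower}. There is therefore nothing in the paper to compare your proposal against.

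As for the proposal itself, it is an honest outline rather than a proof. The first two ingredients (Behrend's 3-AP-free set and the bipartite ``sum graph'' whose edges decompose into induced matchings indexed by midpoints) are correct and standard, and you rightly note that this alone yields only $m^{2-o(1)}$ edges, far short of $(1-o(1))\binom{N}{2}$. The crucial third step --- the amplification to near-complete density --- you describe only in broad strokes (``a suitable power of the base graph built so that a given non-edge is simultaneously forbidden for essentially all matchings at once''). This is the entire content of the Alon--Moitra--Sudakov result, and the actual mechanism they use is not a straightforward graph power; it is a delicate layered construction in which the non-edges are concentrated on a carefully chosen $o(1)$-fraction of vertex pairs. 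Your intuition about what must happen (the forbidden cross-pairs of different matchings must heavily overlap) is correct, but the proposal does not supply the construction that realizes it, so as written it does not constitute a proof.

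Since the present paper only \emph{uses} the theorem, simply citing~\cite{AlonMS12} is the appropriate level of detail here.
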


\section{An $O(n^\eps)$-approximation using $\Ot(n^{2-3\eps})$ space}\label{sec:upper}

In this section, we present our algorithm for computing an approximate
maximum matching in the dynamic graph streams and prove the following
theorem.

\begin{theorem}\label{thm:dynamic-streaming}
  There is a single-pass randomized streaming algorithm that takes as input a parameter $0 < \eps <
  1/2$ and a bipartite graph $G$ with $n$ vertices specified by a dynamic graph stream, uses
  $O(n^{2-3\eps}\cdot\polylog{(n)})$ bits of space, and with high probability, outputs a matching of
  size $\Omega(\opt/n^\eps)$, where $\opt$ is the size of a maximum matching in $G$.
\end{theorem}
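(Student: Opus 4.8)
The plan is to design a sampling-based algorithm built on $\ell_0$-samplers implemented via linear sketches (Lemma~\ref{lem:lz-sampler}), arranged so that the edges recovered are spread across the graph well enough to contain a matching of size $\Omega(\opt/n^\eps)$. First I would partition the $n$ vertices on each side of the bipartition into $n^{1-\eps}$ \emph{blocks}, each of size $n^{\eps}$; this induces $n^{2-2\eps}$ pairs of blocks, one pair $(X_a, Y_b)$ per pair of left/right blocks. For each such block-pair we maintain an independent $\ell_0$-sampler restricted to the edges crossing between $X_a$ and $Y_b$; by Lemma~\ref{lem:lz-sampler} each sampler costs $\polylog(n)$ bits, so storing one per block-pair would cost $\Ot(n^{2-2\eps})$ bits, which is too much. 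To hit the target $\Ot(n^{2-3\eps})$ I would instead subsample the block-pairs: keep each block-pair independently with probability $p = \Theta(n^{-\eps}\polylog(n))$ (or, essentially equivalently, subsample the left blocks at rate $n^{-\eps/2}$ times the right blocks at rate $n^{-\eps/2}$), and only maintain samplers for the surviving pairs. This gives $\Ot(n^{2-3\eps})$ sketches in expectation, with concentration by a Chernoff bound; at the end of the stream, query each surviving sampler to get (at most) one edge per surviving block-pair, and output a maximal matching among the recovered edges.

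The analysis proceeds by fixing a maximum matching $M^*$ of size $\opt$ and arguing that the recovered edge set contains a large matching. The key combinatorial observation is a \emph{counting / pigeonhole} argument on $M^*$: map each edge of $M^*$ to the block-pair containing it. Say a block-pair is \emph{light} if it contains at most $n^{\eps}$ edges of $M^*$ — but every block-pair $(X_a,Y_b)$ trivially contains at most $\min(|X_a|,|Y_b|) = n^{\eps}$ edges of $M^*$, since $M^*$ is a matching, so \emph{every} block-pair is light automatically. Hence $M^*$ is distributed over its block-pairs with at most $n^\eps$ edges each, so at least $\opt/n^\eps$ distinct block-pairs are \emph{touched} by $M^*$. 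Now restrict attention to these $\geq \opt/n^\eps$ touched block-pairs: each survives subsampling with probability $p$, so in expectation $\Omega(p \cdot \opt/n^\eps)$ of them survive and we keep a sampler there. For each surviving touched block-pair $(X_a,Y_b)$, the $\ell_0$-sampler returns a uniformly random edge among the edges of $G$ crossing $(X_a,Y_b)$ — this need \emph{not} be an edge of $M^*$, and it need not be matchable with the edges recovered from other block-pairs, since a single vertex can lie in many block-pairs. This is where the ``coordination'' issue flagged in the paper's overview bites, and it is the main obstacle.

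To handle the coordination obstacle I would argue as follows. Build an auxiliary graph $H$ whose vertices are the \emph{surviving touched block-pairs} and put an edge between two of them if the two recovered edges share an endpoint; a maximal matching among the recovered edges corresponds to an independent-set-like structure, and it suffices to show $H$ has small maximum degree (or few edges) in expectation. The crucial point is that each recovered edge uses two \emph{specific} vertices, each vertex $v$ belongs to exactly one left block or one right block, so the only recovered edges that can conflict at $v$ are those from the (at most $n^{1-\eps}$) block-pairs containing $v$'s block; but only a $p$-fraction of those survive, and among the survivors the sampler's edge hits $v$ only if $v$ happens to be the sampled endpoint, which for a typical high-degree block-pair happens with probability $O(n^{-\eps})$. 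A careful second-moment / union-bound computation then shows that the expected number of conflicting pairs is a lower-order term compared to the number of surviving touched block-pairs, so greedily taking a maximal matching on the recovered edges keeps a constant fraction of them: $\Omega(p \cdot \opt/n^\eps) = \Omega(\opt/n^\eps \cdot \polylog(n)) \cdot \text{(adjust } p) = \Omega(\opt/n^\eps)$ edges, after rescaling $p$ to absorb polylog factors. Finally I would assemble the pieces: (i) space bound via Lemma~\ref{lem:lz-sampler} and the Chernoff bound on the number of surviving block-pairs; (ii) correctness via the pigeonhole bound on touched block-pairs, the survival bound, and the low-conflict bound; (iii) boost the ``with high probability'' guarantee by standard repetition, noting the whole construction is a linear sketch since it is a direct sum of $\ell_0$-sampler sketches over disjoint coordinate sets. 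I expect step (iii)'s conflict analysis — getting the coordination between independently-sampled block-pairs to yield a genuine matching rather than a set of edges with many shared endpoints — to be the technically delicate heart of the proof, and I would likely need to be somewhat cleverer about \emph{how} block-pairs are subsampled (e.g.\ subsampling left and right blocks rather than pairs directly, so that a surviving vertex participates in a controlled number of surviving pairs) to make the numbers work cleanly.
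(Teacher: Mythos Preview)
Your proposal has a genuine quantitative gap in the counting that cannot be patched by tweaking $p$. You keep each of the $n^{2-2\eps}$ block-pairs with probability $p=\Theta(n^{-\eps})$ to hit the $\Ot(n^{2-3\eps})$ space budget, and then argue that at least $\opt/n^\eps$ block-pairs are touched by $M^*$, so $\Omega(p\cdot \opt/n^\eps)$ survive. But $p\cdot \opt/n^\eps = \Theta(\opt/n^{2\eps})$, not $\Omega(\opt/n^\eps)$; the line ``$\Omega(p \cdot \opt/n^\eps) = \ldots = \Omega(\opt/n^\eps)$'' is simply off by a factor of $n^\eps$. To see that this is a real obstruction and not just loose accounting, take $\opt=n$ and let $M^*$ match left block $i$ entirely to right block $i$. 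Then exactly $n^{1-\eps}$ block-pairs are touched (the diagonal), and after subsampling only $\Theta(n^{1-2\eps})$ of them survive; since each surviving pair contributes at most one recovered edge, the output matching has size at most $n^{1-2\eps}=\opt/n^{2\eps}$, i.e.\ only an $n^{2\eps}$-approximation. Subsampling left and right blocks separately at rate $n^{-\eps/2}$ does not help: a diagonal pair still survives with probability $n^{-\eps}$, and the same bound applies.

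The paper's algorithm sidesteps this by \emph{not} subsampling pairs uniformly. It partitions each side into $\alpha=\Theta(\opt/n^\eps)$ groups (so the number of groups depends on an estimate $\estopt$, which is why $O(\log n)$ guesses are run in parallel), and then for each left group $L_i$ it picks $\beta=\Theta((\alpha/n^\eps)\log n)$ \emph{random} right groups as active partners, maintaining one $\ell_0$-sampler per active pair. The total number of samplers is $\alpha\beta=\Ot(\opt^2/n^{3\eps})=\Ot(n^{2-3\eps})$. The crucial gain is that \emph{every} left group gets $\beta$ partners, so none is starved; the analysis then shows that a constant fraction of the $L_i$'s are ``spanning'' (their $n^\eps$ matching edges hit $\Omega(n^\eps)$ distinct right groups), whence each spanning $L_i$ hits a matchable partner with high probability, and finally that $\Omega(\alpha)$ of these matchable partners are \emph{distinct} right groups. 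Because distinct groups have disjoint vertex sets, one sampled edge per such pair is automatically a matching of size $\Omega(\alpha)=\Omega(\opt/n^\eps)$ --- so the ``conflict analysis'' you flagged as the technical heart is entirely avoided by design. The missing idea in your plan is precisely this per-left-group allocation of random partners in place of i.i.d.\ pair subsampling.
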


In the following, whenever we use \lzSamplers, we always apply
Lemma~\ref{lem:lz-sampler} with parameter $\delta = n^{5}$. Since the
number of \lzSamplers used by our algorithm is bounded by $O(n^2)$,
with high probability, none of them will fail. In the rest of this
section, we always assume this is the case for all \lzSamplers we use,
and we do not explicitly account for the probability of \lzSamplers
failure in our proofs.

For simplicity, we assume that the algorithm is provided with a value $\estopt$ that is a
$2$-approximation of $\opt$ i.e., the size of a maximum matching in $G$. This is without loss of
generality, since we can run our algorithm for $O(\log{n})$ different estimates of $\opt$ in
parallel and output the largest matching among the matchings found for all estimates. In addition,
we can assume $\opt \ge n^\eps$, since otherwise a single edge is an $n^\eps$-approximation of the
maximum matching, which can be obtained by maintaining an \lzSampler over all edges in the graph.

\textbox{Algorithm \algUpper \textnormal{A single-pass dynamic streaming algorithm for the maximum matching problem.}}{
\medskip
$\bullet$ \textbf{Input:} A bipartite graph $G(L,R,E)$ with $n$
vertices on each side, specified by a dynamic graph stream, a
parameter $0 < \eps < 1/2$, and a $2$-approximation to the size of a maximum matching in
$G$ as $\estopt$.\\ 
$\bullet$ \textbf{Output:} A matching $M$ with size
$\Omega(\opt/n^{\eps})$. 
\\ $\bullet$ \textbf{Pre-processing:}
  \begin{enumerate}
  	\item Let $\alpha = \ceil{\frac{\estopt}{n^{\eps}}}$, $\beta  = 6\ceil{\frac{\estopt}{n^{2\eps}}}\cdot\log{n}$, and $\gamma = 4n^{\eps}$. 
	\item Create two collections $\FL$ and $\FR$, each containing $\alpha$ sets (called \emph{groups}). Create
          two \emph{$\gamma$-wise independent} hash functions $h_L: L \mapsto \FL$ and $h_R: R
          \mapsto \FR$. Assign each vertex $u \in L$ (resp. $v \in R$) to the group $h_L(u) \in \FL$
          (resp. $h_R(v) \in \FR$).
        \item For each $L_i \in \FL$, assign $\beta$ groups in $\FR$ to $L_i$ chosen
          \emph{independently and uniformly at random} with replacement. For each $R_j$ assigned to
          $L_i$, we say $R_j$ is an \emph{active partner} of $L_i$ and $(L_i, R_j)$ form an
          \emph{active pair}.
  \end{enumerate}
$\bullet$ \textbf{Streaming updates:}
  \begin{enumerate}[$*$]
  \item For each $L_i \in \FL$ and each of its active partners $R_j \in \FR$, maintain an \lzSampler
    over the edges between the vertices in $L_i$ and $R_j$.
  \end{enumerate}
$\bullet$ \textbf{Post-processing:} 
  \begin{enumerate}[$*$]
  \item Sample one edge from each maintained \lzSampler and compute a maximum matching $M$ over the
    sampled edges.
  \end{enumerate}
}

The space complexity of Algorithm~\algUpper is easy to verify. The algorithm stores two
$\gamma$-wise independent hash functions $h_L$ and $h_R$ to assign vertices to their
groups, which requires $\Ot(\gamma) = \Ot(n^{\eps})$ bits of space~\cite{RAbook}. $\Ot(\alpha \cdot \beta)$ truly
random bits are needed for identifying the active partners of each group in $\FL$, and $O(\alpha
\cdot \beta)$ \lzSamplers are maintained for the active pairs during the stream, where each of them
requires $O(\log^3{n})$ bits of space (Lemma~\ref{lem:lz-sampler}). Hence, the total space complexity
of the algorithm is:
\[
\Ot(n^\eps + \alpha \cdot \beta) = \Ot(n^\eps + \frac{\opt^2}{n^{3\eps}}) = \Ot(n^{2-3\eps})
\]
where the last equality is by choice of $\eps <1/2$. 

We now prove the correctness of the algorithm. Fix a maximum matching $M^*$ in $G$ with size
$\opt$. The following concentration bound ensures that each group in $\FL$ and $\FR$ contains $(1
\pm 0.001)n^{\eps}$ vertices of the maximum matching $M^*$.

\begin{claim}[\!\!\cite{SchmidtSS95}]\label{clm:independence}
	If $X$ is sum of $k$-wise independent random variables taking values in $[0,1]$, and $\mu = \Ex[X]$, then:
	\[
        \Pr(\card{X-\mu} > \eps'\mu) < \exp(-\floor{k/2}) ~~~~\forall \eps' \leq 1, k \leq
        \floor{\eps'^2\mu e^{-1/3}}
	\]
\end{claim}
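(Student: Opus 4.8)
The plan is the moment method: bound the even central moment $\Ex\big[(X-\mu)^\ell\big]$ for a suitable even $\ell\le k$ and apply Markov's inequality. Write $X=\sum_i X_i$ with the $X_i\in[0,1]$ being $k$-wise independent, put $Y_i=X_i-\Ex[X_i]$ so that $\Ex[Y_i]=0$ and $\card{Y_i}\le 1$, and set $\ell=2\floor{k/2}$. Since $x\mapsto x^\ell$ is even and increasing on $[0,\infty)$,
\[
\Pr\big(\card{X-\mu}>\eps'\mu\big)\;=\;\Pr\big((X-\mu)^\ell>(\eps'\mu)^\ell\big)\;\le\;\frac{\Ex\big[(X-\mu)^\ell\big]}{(\eps'\mu)^\ell},
\]
so it is enough to show the last quantity is at most $e^{-\ell/2}=e^{-\floor{k/2}}$.

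For the moment itself, expand $(X-\mu)^\ell=\big(\sum_i Y_i\big)^\ell$ into monomials $Y_{f(1)}\cdots Y_{f(\ell)}$ over maps $f\colon[\ell]\to[n]$ and group the maps by the partition of $[\ell]$ induced by their level sets. Because $\ell\le k$, each monomial's expectation factors over the blocks, and it vanishes whenever some index occurs exactly once (as $\Ex[Y_i]=0$). For a surviving partition, $\card{Y_i}\le 1$ gives $\card{\Ex[Y_i^a]}\le\Ex[Y_i^2]=\mathrm{Var}(X_i)$ for every $a\ge 2$, so summing over the assignments of distinct indices to blocks yields
\[
\Ex\big[(X-\mu)^\ell\big]\;\le\;\sum_{m\le\ell/2}S_{\ge2}(\ell,m)\Big(\sum_i\mathrm{Var}(X_i)\Big)^{m},
\]
where $S_{\ge2}(\ell,m)$ is the number of partitions of $[\ell]$ into $m$ blocks of size at least $2$. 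Two features of the hypothesis enter here: $X_i\in[0,1]$ forces $\mathrm{Var}(X_i)\le\Ex[X_i]$, hence $\sum_i\mathrm{Var}(X_i)\le\mu$; and $\sum_m S_{\ge2}(\ell,m)\mu^m$ is exactly the $\ell$-th central moment of a Poisson variable $N$ of mean $\mu$ (read off from $\Ex[e^{t(N-\mu)}]=e^{\mu(e^t-1-t)}$). Thus $\Ex[(X-\mu)^\ell]\le\Ex[(N-\mu)^\ell]$.

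It then remains to bound the Poisson central moment and optimize $\ell$. One route estimates $\sum_m S_{\ge2}(\ell,m)\mu^m$ directly — its top term is $(\ell-1)!!\,\mu^{\ell/2}$, and the remaining terms form a rapidly decaying series because $\mu\ge\ell$ — another uses the termwise inequality $e^t-1-t\le\frac{t^2}{2}e^t$ to dominate the moment's exponential generating function coefficientwise; either way one gets, in the main small-deviation regime $\ell\ll\mu$, a bound of the form $\poly(\ell)\cdot(e^{-1}\ell\mu)^{\ell/2}$, and more generally the Poisson large-deviation estimate $\Ex[(N-\mu)^\ell]\le O(\sqrt\mu)\cdot\max_{a>0}\big[a^\ell\,\Pr(N=\mu+a)\big]$. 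Dividing by $(\eps'\mu)^\ell$ and invoking $k\le\floor{\eps'^2\mu e^{-1/3}}$, \ie $\ell/(\eps'^2\mu)\le e^{-1/3}$, turns the small-deviation bound into $\poly(\ell)\cdot e^{-2\ell/3}\le e^{-\ell/2}$ for all but the smallest $\ell$; the moderate-deviation regime (possible only when $\eps'$ is bounded away from $0$) is checked through the large-deviation estimate, whose maximizer $a^\ast$ solves $a^\ast\ln(1+a^\ast/\mu)\approx\ell/\mu$. The tiny cases (say $k\le2$) are disposed of separately by Chebyshev's inequality, which already suffices there.

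The main obstacle is precisely this final constant-tracking. The moment estimate must be sharp enough — its leading constant essentially $1/e$, matching $(\ell-1)!!\approx(\ell/e)^{\ell/2}$ — so that the $e^{-1/3}$ factor deliberately built into the hypothesis is just enough to absorb the polynomial corrections and the sub-leading terms uniformly over all admissible $\eps'$ and $\mu$; any cruder bound on $(\ell-1)!!$ or on the lower-order terms loses this race.
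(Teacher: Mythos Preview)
The paper does not prove this claim at all: it is quoted verbatim as a known result from~\cite{SchmidtSS95} and used as a black box, so there is no ``paper's own proof'' to compare against. Your moment-method outline --- bounding an even central moment of $X$ via $k$-wise independence, dominating it by the corresponding Poisson central moment, and then tracking constants so that the hypothesis $k\le\floor{\eps'^2\mu e^{-1/3}}$ yields $e^{-\floor{k/2}}$ --- is exactly the strategy carried out in the cited Schmidt--Siegel--Srinivasan paper, so in that sense you are on the right track.

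That said, as written your proposal is a plan rather than a proof: the final paragraph openly flags the constant-tracking as ``the main obstacle'' without actually executing it, and the moderate-deviation and small-$k$ cases are gestured at rather than worked out. If you intend to include a self-contained proof, you will need to fill in those estimates (the careful bound on $\sum_m S_{\ge 2}(\ell,m)\mu^m$ and the verification that the $e^{-1/3}$ slack absorbs the polynomial factors uniformly); otherwise, simply citing~\cite{SchmidtSS95} as the paper does is the appropriate move.
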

For simplicity, in the following, we assume every group has exactly $n^{\eps}$ vertices of
$M^*$\footnote{One can simply substitute $c \in [0.999n^{\eps}, 1.001n^{\eps}]$ in following
  equations instead of $n^{\eps}$ and obtain the same result with a slight change in the
  constants.}. For any group $L_i \in \FL$, (resp. $R_j \in \FR$) we refer to the edges in $M^*$
that are incident on $L_i$ (resp. $R_j$) as the \emph{matching edges} of this group. Since $M^*$ is
a matching, the number of matching edges of each group is also $n^{\eps}$.
 
We say a $(L_i,R_j)$ pair is \emph{matchable} by $M^*$ if $L_i$ and $R_j$ share at least one matching edge.
The general idea of the proof is to show that among all $(L_i, R_j)$ active pairs, there is a subset
$\FM \subseteq \FL \times \FR$ of $\Omega(\opt/n^{\eps})$ active pairs with the following two
properties:
\begin{enumerate}[(i)]
\item \label{prop:1} Each pair is matchable by $M^*$.
\item \label{prop:2} No two pairs in $\FM$ share the same endpoints $L_i$ or $R_j$.
\end{enumerate} 

Intuitively, properties~(\ref{prop:1},\ref{prop:2}) together ensures that there exists a ``matching''
between the groups in $\FL$ and $\FR$ of size $\Omega(\opt/n^\eps)$. Since we maintain an \lzSampler
for each active pair in $\FM$, and each matchable active pair contains at least one edge in $G$, the
\lzSamplers for the matchable active pairs will return $\card{\FM}$ edges, which will form a
matching of size $\card{\FM} = \Omega(\opt/n^\eps)$ in graph $G$.
 
To prove the existence of such a set $\FM$, we start by arguing that there are
$\Omega(\opt/n^{\eps})$ groups $L_i$ in $\FL$ such that (essentially) $\Omega(n^\eps)$ matching
edges of $L_i$ are incident on $\Omega(n^\eps)$ \emph{distinct} groups $\FR$. Consequently, when the
algorithm randomly assigns $L_i$ with $\beta = \Omgt(\alpha/n^{\eps})$ groups in $\FR$, since
$\card{\FR} = \alpha$, with high probability, at least one of the $(L_i,R_j)$ active pairs is
matchable by $M^*$. This ensures that we have $\Omega(\opt/n^{\eps})$ $(L_i, R_j)$ matchable active
pairs where all $L_i$'s are distinct. Finally, we show that a constant fraction of these matchable
active pairs also have distinct $R_j$'s, with a constant probability, proving
property~(\ref{prop:2}).

We now provide the formal proof. To continue, we need the following definitions.  We say a group
$L_i \in \FL$ is \emph{spanning} if the matching edges of $L_i$ are incident on at least
$\min\set{n^\eps,\alpha}/3$ different $R_j \in \FR$. We say that $L_i$ \emph{preserves} an edge in
$M^*$ if $L_i$ belongs to at least one matchable active pair. 

\begin{lemma}\label{lem:spanning-preserves}
  With probability at least $1 - 1/n$, every spanning $L_i \in \FL$ preserves an edge in $M^*$. 
\end{lemma}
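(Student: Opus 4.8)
The plan is to fix the two hash functions $h_L,h_R$ at the outset --- which simultaneously fixes the set of spanning groups in $\FL$ and, for each $L_i\in\FL$, the set
\[
  S_i \;:=\; \set{\, R_j\in\FR \;:\; \text{some matching edge of } L_i \text{ has its $R$-endpoint in } R_j\,}
\]
--- and then to argue over the remaining randomness, namely the active-partner assignment; since the bound will hold for every realization of $(h_L,h_R)$, it holds unconditionally. The first step is a reformulation that makes the event transparent: by the definition of $S_i$, a pair $(L_i,R_j)$ is matchable by $M^*$ exactly when $R_j\in S_i$, so $L_i$ preserves an edge of $M^*$ exactly when at least one of the $\beta$ groups assigned to $L_i$ as an active partner lands in $S_i$. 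Moreover, if $L_i$ is spanning then $\card{S_i}\ge\min\set{n^\eps,\alpha}/3$, directly from the definition of ``spanning''.

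Next I would bound the failure probability of a single spanning $L_i$. Its $\beta$ active partners are drawn independently and uniformly from $\FR$ with replacement, and $\card{\FR}=\alpha$, so each draw hits $S_i$ independently with probability $\card{S_i}/\alpha\ge\min\set{n^\eps,\alpha}/(3\alpha)$; hence, using $1-x\le e^{-x}$,
\[
  \Pr\big[\,L_i \text{ does not preserve an edge of } M^*\,\big]\;\le\;\Big(1-\tfrac{\min\set{n^\eps,\alpha}}{3\alpha}\Big)^{\beta}\;\le\;\exp\!\Big(-\tfrac{\beta\cdot\min\set{n^\eps,\alpha}}{3\alpha}\Big).
\]
It then remains to check that, for the actual values of $\alpha=\ceil{\estopt/n^\eps}$ and $\beta=6\ceil{\estopt/n^{2\eps}}\log n$, this exponent is $\Omega(\log n)$ and, crucially, large enough relative to the number of spanning groups (at most $\alpha\le n$) that a union bound leaves failure probability at most $1/n$. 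I would split into the two regimes matching the $\min$. If $\alpha\le n^\eps$, the exponent is at least $\beta/3\ge 2\log n$ because $\beta\ge 6\log n$, so each group fails with probability $\le 1/n^2$ and the union over $\le n^\eps<n$ of them costs $\le 1/n$. If $\alpha> n^\eps$, the exponent equals $\beta n^\eps/(3\alpha)$; plugging in $\beta\ge 6(\estopt/n^{2\eps})\log n$, $\alpha\le \estopt/n^\eps+1$, and the standing assumption $\opt\ge n^\eps$ (hence $\estopt=\Omega(n^\eps)$, so $\alpha=O(\estopt/n^\eps)$) gives an exponent of $\Omega(\log n)$, and keeping the ceiling gives the sharper bound $2\log n\,(1-1/\alpha)$; the resulting per-group failure probability $n^{-2+2/\alpha}$, multiplied by the $\le\alpha$ groups and using $n^\eps\le\alpha=O(n^{1-\eps})$, is $n^{-1-\eps+o(1)}$, comfortably below $1/n$.

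The heart of the argument --- ``one of $\beta$ uniformly random partners hits a target set of relative size $\ge\min\set{n^\eps,\alpha}/(3\alpha)$'' --- is immediate from the definitions of ``spanning'', ``matchable'' and ``preserves'' together with a single application of $1-x\le e^{-x}$. The one place that genuinely requires care, and the step I expect to be the main (essentially only) obstacle, is the final union bound: the exponent $\beta n^\eps/(3\alpha)$ weakens as $\alpha$ shrinks toward a constant, whereas the number of groups being union-bounded grows with $\alpha$, so one must verify $\alpha\cdot\Pr[L_i\text{ fails}]\le 1/n$ for \emph{every} value of $\alpha$ simultaneously. This is precisely why the case split at $\alpha=n^\eps$ is the natural one: in the first regime the bound $\beta\ge 6\log n$ alone pins the exponent at $\ge 2\log n$ independently of $\alpha$, and in the second regime $\alpha$ is already $\ge n^\eps$, which more than absorbs the factor lost in rounding $\estopt/n^\eps$ up to $\alpha$.
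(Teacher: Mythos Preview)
Your proof is correct and follows essentially the same approach as the paper: condition on the hash functions, show that each spanning $L_i$ fails to preserve an edge with probability at most roughly $1/n^2$ via the bound $(1-\min\{n^\eps,\alpha\}/(3\alpha))^\beta$ and a case split on which term realizes the $\min$, then take a union bound over the at most $\alpha$ groups. The paper is terser---it simply writes $\beta=(6\alpha\log n)/n^\eps$, ignoring the ceilings, to get the exponent $2\log n$ directly---whereas you track the rounding and arrive at $2\log n(1-1/\alpha)$; both yield the desired $1-1/n$ guarantee.
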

\begin{proof}
  We argue that if $L_i$ is spanning, then $L_i$ preserves an edge in $M^*$ with probability at
  least $1- 1/n^2$. Then, by applying union bound over all spanning $L_i$, with probability $(1 - 1/n)$,
  every spanning $L_i$ preserves an edge in $M^*$.
	
  For any spanning $L_i$, there are $\min\set{n^\eps, \alpha}/3$ different $R_j$'s such that $M^*$
  contains an edge between $L_i$ and $R_j$, i.e., $(L_i,R_j)$ is matchable by $M^*$. Recall that
  $L_i$ is assigned with $\beta = (6\alpha\log{n})/n^\eps$ groups in $\FR$ uniformly at random.
          
  If $\alpha/3$ different $R_j$'s are matchable with $L_i$ by $M^*$, assigning $2\log{n}$ random
  groups in $\FR$ to $L_i$ suffices to ensure that with probability at least $1 - 1/n^2$, $L_i$
  preserves an edge in $M^*$.
          
  If $n^\eps/3$ different $R_j$'s are matchable with $L_i$ by $M^*$,
  the probability that a spanning $L_i$ does not preserve any edge in
  $M^*$ is at most
  \[
  (1 - {n^\eps \over 3\alpha})^{\beta} \leq \exp(-{n^\eps \over 3\alpha}\cdot {\beta}) =
  \exp(-{n^\eps \over 3\alpha} \cdot \frac{6\alpha\log{n}}{n^{\eps}} ) \leq {1 \over n^2}
  \]
\end{proof}

\begin{lemma}\label{lem:large-spanning}
  With a constant probability, at least $1/4$ of the $L_i$'s  are spanning.
\end{lemma}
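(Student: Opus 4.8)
The plan is to reduce each group to a balls‑into‑bins instance and then apply Markov's inequality twice. Fix the hash function $h_L$ and work under the high‑probability event of Claim~\ref{clm:independence}, so that (following the convention above) each $L_i\in\FL$ has exactly $n^{\eps}$ matching edges of $M^*$; let $v_1,\dots,v_{n^{\eps}}\in R$ be their right endpoints, which are distinct because $M^*$ is a matching. Note that whether $L_i$ is spanning depends only on $h_L$ and $h_R$, not on the random active‑partner assignment. Since $h_R$ is $\gamma$‑wise independent with $\gamma=4n^{\eps}\ge n^{\eps}$, the images $h_R(v_1),\dots,h_R(v_{n^{\eps}})$ are mutually independent and uniform over the $\alpha=\card{\FR}$ groups of $\FR$. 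Let $D_i$ be the number of distinct groups of $\FR$ hit by $v_1,\dots,v_{n^{\eps}}$ and set $M:=\min\set{n^{\eps},\alpha}$; then $L_i$ is spanning exactly when $D_i\ge M/3$, and $D_i\le M$ always, since $D_i$ is at most the number of balls $n^{\eps}$ and also at most the number of bins $\alpha$.

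First I would lower bound $\Ex[D_i]$. The chance that a fixed group of $\FR$ is missed by all of $v_1,\dots,v_{n^{\eps}}$ is $(1-1/\alpha)^{n^{\eps}}\le e^{-n^{\eps}/\alpha}$, so $\Ex[D_i]=\alpha\bigl(1-(1-1/\alpha)^{n^{\eps}}\bigr)\ge\alpha\bigl(1-e^{-n^{\eps}/\alpha}\bigr)$. Writing $x:=n^{\eps}/\alpha$: if $x\le 1$ then $M=n^{\eps}$ and $\alpha(1-e^{-x})=n^{\eps}\cdot\frac{1-e^{-x}}{x}\ge n^{\eps}(1-e^{-1})$, since $\frac{1-e^{-x}}{x}$ is decreasing on $(0,1]$; if $x>1$ then $M=\alpha$ and $\alpha(1-e^{-x})\ge\alpha(1-e^{-1})$. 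Either way, $\Ex[D_i]\ge(1-1/e)\,M$.

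Now $M-D_i\ge 0$ with $\Ex[M-D_i]=M-\Ex[D_i]\le M/e<0.37\,M$, so Markov's inequality yields
\[
\Pr[\,L_i\text{ not spanning}\,]=\Pr\bigl[M-D_i>\tfrac{2}{3}M\bigr]\le\frac{0.37\,M}{(2/3)\,M}<0.56 .
\]
By linearity of expectation over the $\alpha$ groups in $\FL$, the expected number of non‑spanning groups is less than $0.56\,\alpha$, and a second application of Markov's inequality gives $\Pr\bigl[\#\set{i:L_i\text{ not spanning}}\ge\tfrac{3}{4}\alpha\bigr]<0.56/0.75<1$. Consequently, with probability at least a positive absolute constant, fewer than $\tfrac{3}{4}\alpha$ of the groups are non‑spanning, i.e.\ more than $\alpha/4$ of them are spanning, which proves the lemma.

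The one delicate point is quantitative rather than conceptual: the argument goes through only because $\Pr[L_i\text{ not spanning}]$ is bounded away from $1$ with enough slack that the outer Markov step still leaves positive probability, which is precisely why both the sharp factor $1-1/e$ in the estimate of $\Ex[D_i]$ and the a‑priori cap $D_i\le M$ are used. If one wants a cleaner constant, the inner estimate can instead be obtained by a union bound over the at most $\binom{\alpha}{\ceil{M/3}}$ possible supports of $v_1,\dots,v_{n^{\eps}}$ (which gives $\Pr[D_i<M/3]\le(e/9)^{M/3}$, and $0$ once $M\le 3$), or by McDiarmid's bounded‑differences inequality, since relocating a single ball changes $D_i$ by at most $1$; each drives the per‑group failure probability well below $1/4$ and lifts the overall success probability above $1/2$. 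I expect the only genuine work in a full write‑up to be propagating the harmless $(1\pm0.001)$ slack of Claim~\ref{clm:independence} through these constants.
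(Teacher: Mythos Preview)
Your proof is correct and follows essentially the same approach as the paper: both reduce each $L_i$ to a balls-and-bins instance (using that $h_R$ is $\gamma$-wise independent with $\gamma\ge n^{\eps}$), bound the probability that a single $L_i$ is spanning by an absolute constant, and then apply Markov's inequality to the number of non-spanning groups. The only cosmetic difference is that the paper packages the per-group step as a separate claim (Claim~\ref{clm:full-ball}, proved in the appendix via a case split using Markov and a Chernoff-type bound) yielding $\Pr[L_i\text{ spanning}]\ge 1/2$, whereas you compute $\Ex[D_i]\ge(1-1/e)M$ directly and apply Markov to $M-D_i$ to obtain the slightly weaker $\Pr[L_i\text{ not spanning}]<0.56$.
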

\begin{proof}
  We use the following simple balls and bins argument (see Appendix~\ref{app:full-ball} for a
  proof).
\begin{claim}\label{clm:full-ball}
  Suppose we assign $x$ balls to $y$ bins independently and uniformly at random. With probability at
  least $1/2$, the number of non-empty bins is at least $\min\set{x,y}/3$.
\end{claim}

  Fix an $L_i \in \FL$. Consider each $R_j \in \FR$ as a \emph{bin}
  and each matching edge of $L_i$ as a \emph{ball}. An edge $(u,v)$
  ($v \in R$), i.e., a ball, is assigned to the bin $R_j$ iff the
  group assigned to vertex $v$ is $R_j$. The number of balls here is $n^{\eps}$ and since we use a $\gamma$-wise independent hash function ($\gamma > n^\eps$) to 
  assign the balls to the bins, all these $n^{\eps}$ balls are assigned independently.   
  By Claim~\ref{clm:full-ball}, at least $\min\set{n^\eps, \alpha}/3$
  different $R_j$'s have edges in $M^*$ that are incident on $L_i$ and
  $R_j$ (hence $L_i$ is spanning), with probability at least $1/2$. By
  Markov inequality, with a constant probability, at least $\alpha/4$
  $L_i$'s are spanning.
\end{proof}
	
        \begin{lemma}\label{lem:distinct-spanning}
          With a constant probability, $\Omega(\alpha)$ groups in $\FR$ are active partners of
          \emph{distinct} spanning $L_i \in \FR$, such that $L_i$ and $R_j$ are matchable by $M^*$.
        \end{lemma}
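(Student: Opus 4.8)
The plan is to recast the conclusion as a statement about a single auxiliary bipartite graph and then show that graph has a large matching. Let $S \subseteq \FL$ be the set of spanning groups, and form the bipartite graph $H$ on vertex set $S \cup \FR$ whose edges are exactly the pairs $(L_i,R_j)$ with $L_i \in S$, with $R_j$ an active partner of $L_i$, and with $(L_i,R_j)$ matchable by $M^*$. By construction a matching of size $m$ in $H$ is precisely a collection of $m$ groups of $\FR$ that are active partners of $m$ \emph{distinct} spanning $L_i$'s, each pair matchable, so it suffices to exhibit a matching of size $\Omega(\alpha)$ in $H$ with constant probability. Since whether $L_i$ is spanning depends only on the hash $h_R$, which is independent of the random choice of active partners, I would first invoke Lemma~\ref{lem:large-spanning} and condition on the (constant-probability) event that $|S| \geq \alpha/4$; after this, only the active-partner randomness remains.

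Next I would show that $H$ is ``dense but degree-bounded'' in expectation. A spanning $L_i$ is matchable (via $M^*$) with a fixed set $G_i$ of at least $\min\{n^\eps,\alpha\}/3$ groups of $\FR$, and it retains each of them as an active partner with probability $1-(1-1/\alpha)^\beta$; because $\beta$ is chosen so that $\beta\cdot\min\{n^\eps,\alpha\}/\alpha = \Theta(\log n)$ (the same estimate that drives Lemma~\ref{lem:spanning-preserves}), the expected degree of $L_i$ in $H$ is $\Theta(\log n)$. On the other side, a group $R_j$ is matchable with at most $\min\{n^\eps,\alpha\}$ of the groups $L_i$ — it carries only $n^\eps$ matching edges — and each of those retains $R_j$ with probability at most $\beta/\alpha$, so the expected degree of $R_j$ in $H$ is $O(\log n)$ as well. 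Each vertex degree in $H$ is a sum of negatively associated indicators (which bin is hit when a fixed group throws its $\beta$ balls into $\FR$), so a Chernoff bound gives that, with high probability and simultaneously over all $O(\alpha)$ vertices of $H$, every left degree is $\Theta(\log n)$ and every right degree is $O(\log n)$; I would condition on this event too.

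With both conditioning events in force, $H$ has $\Omega(\alpha\log n)$ edges and maximum degree $O(\log n)$, so a greedy maximal matching in $H$ — repeatedly pick an edge and delete both its endpoints — has size $\Omega(\alpha\log n)/O(\log n) = \Omega(\alpha)$, and this matching is the desired set $\FM$. Since the first conditioning event has constant probability and the second has probability $1-o(1)$, the whole argument succeeds with constant probability.

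I expect the main obstacle to be the bookkeeping in the concentration step: the expected degrees in $H$ are only $\Theta(\log n)$, which is exactly the regime where a union bound over the (up to roughly $n^{1-\eps}$) vertices of $H$ is delicate, so one must either take the absolute constant in the definition of $\beta$ large enough that the Chernoff exponent comfortably dominates $\log\alpha$, or else replace the high-probability degree bounds by Markov-type statements (``a constant fraction of the $R_j$ have degree $O(\log n)$'', ``a constant fraction of the spanning $L_i$ have degree $\Omega(\log n)$'') and run the greedy argument on the induced subgraph. A secondary point is to make sure the $\Omega(\alpha\log n)$ lower bound on the number of edges of $H$ genuinely survives the conditioning — this is where it matters that one starts from $\Omega(\alpha)$ spanning groups, each contributing $\Omega(\log n)$ matchable active partners.
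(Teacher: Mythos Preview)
Your approach is sound but genuinely different from the paper's. The paper never analyses the whole graph $H$; instead it lets each spanning $L_i$ retain only the \emph{first} matchable $R_j$ among its $\beta$ active partners, and argues directly that these ``kept'' $R_j$'s comprise $\Omega(\alpha)$ distinct elements of $\FR$. Concretely (in the main case $n^\eps < \alpha$), the $\Omega(\opt)$ edges of $M^*$ that witness matchability between spanning $L_i$'s and their $R_j$'s must, by a pigeonhole count, hit $\Omega(\alpha)$ groups $R_j$ each incident to $\Omega(n^\eps)$ such edges; every such $R_j$ is then kept by one of its $\Omega(n^\eps)$ matchable spanning $L_i$'s with probability bounded away from~$0$ (since, conditioned on $L_i$ preserving, each of its at most $n^\eps$ matchable partners is ``first'' with probability $\geq 1/n^\eps$), and a single Markov bound on the number of un-kept $R_j$'s finishes. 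So the paper trades your degree-concentration machinery for a first-moment argument, avoiding Chernoff entirely --- at the price of a less structural picture and rather small constants.

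On the obstacle you flag: it is real, and your option~(a) is not actually available, since the constant $6$ in $\beta$ is fixed by the algorithm rather than by the proof. Your option~(b) works, but there is an even simpler fix inside your own framework: do not try to control individual left degrees at all. The total edge count $|E(H)| = \sum_{L_i \in S}\deg_H(L_i)$ is, conditioned on $h_L,h_R$, a sum of \emph{independent-across-$L_i$} random variables, each a sum of negatively associated indicators and hence with variance at most its mean; Chebyshev then gives $|E(H)| \geq \tfrac12\,\Ex[|E(H)|] = \Omega(\alpha\log n)$ with probability $1-o(1)$. Combined with the (genuinely high-probability) bound $\Delta(H) = O(\log n)$, your greedy argument goes through unchanged.
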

        \begin{proof}
	  Suppose each spanning $L_i$, when picking the $R_j$'s, only
          keeps the first $R_j$ where $L_i$ and $R_j$ are matchable by
          $M^*$ (picking more can only increase the size of the final
          matching).  We only need to show that the number of distinct
          $R_j$'s that are kept by $L_i$'s is $\Omega(\alpha)$.
          
         Suppose $n^\eps < \alpha$; the other case when $n^\eps \ge
         \alpha$ is an easy case since each spanning $L_i$ is
         matchable with $1/3$ fraction of the groups in $\FR$.  
         By Lemma~\ref{lem:large-spanning}, there are $\alpha/4$ spanning $L_i$'s 
         with high probability. Therefore, there are $(n^\eps/3) \cdot (\alpha/4) = (n^{\eps}/3) \cdot
         (\opt/(4n^\eps)) = (\opt/12)$ edges in $M^*$ incident on all
         the spanning $L_i$'s; we denote these $\opt/12$ edges of
         $M^*$ by $M'$.  Since each group in $\FR$ has $n^{\eps}$
         matching edge, and $\alpha = \opt/n^{\eps}$, it must be that
         at least $(\alpha/24)$ groups in $\FR$ contain at least
         $(n^\eps/23)$ vertices incident on $M'$; otherwise, the total
         number of edges incident on $M'$ is less than
	  \[
	  \frac{\alpha}{24} \cdot n^\eps + {\frac{23}{24}} \cdot \frac{n^\eps}{23} = \frac{\opt}{24} + \frac{\opt}{24} = {\opt \over 12}
	  \]
	  Let $\FR'$ be the set of all these $(\alpha/24)$ groups in $\FR$. Conditioned on the event
          that $L_i$ preserves an edge in $M^*$, for each of the $R_j$ groups that are matchable
          with $L_i$ by $M^*$ (there are at most $n^\eps$ such groups), the probability that $R_j$
          is kept by $L_i$ is at least $1/n^\eps$. Therefore, for each of these $(\alpha/24)$ $R_j$
          groups, the probability that $R_j$ is assigned to any spanning $L_i$ is at most
	  \[  
	  (1-{1 \over n^\eps})^{n^\eps/23} \le \exp({-{1n^\eps \over 23 n^\eps}}) = e^{-1/23}
	  \]
	  Hence the expected number of groups in $\FR'$ that are not active partner of any spanning
          $L_i$ is at most $\Omega(\alpha)$. By Markov inequality, with a constant probability,
          $\Omega(\alpha)$ different $R_j \in \FR'$ will be kept by some spanning $L_i$.

	  Note that the probability of success can be boosted to any constant by allowing $L_i$ to
          repeatedly pick $\beta$ groups from $\FR$ as active partners for a constant number of
          times.
\end{proof}

\begin{proof}(Theorem~\ref{thm:dynamic-streaming}) By Lemma~\ref{lem:distinct-spanning},
  $\Omega(\alpha)$ groups in $\FR$ will be assigned to $\Omega(\alpha)$ distinct spanning groups in
  $\FL$; moreover, every such pairs are matchable by $M^*$. Since all these pairs are matchable by
  $M^*$, there exists at least one edge between each of these pairs. By picking one edge for each of
  these pairs, using the \lzSampler between these $(L_i,R_j)$ active pairs, we obtain a matching of
  size $\Omega(\alpha)$. Therefore, in the post-processing step, the algorithm can find a matching
  of size $\Omega(\alpha) = \Omega(\opt/n^{\eps})$.
\end{proof}

\section{An $n^{2-3\eps - o(1)}$ lower bound for $O(n^{\eps})$-approximation}\label{sec:lower}

In this section, we provide our lower bound result for approximating
the maximum matching using linear sketches.  As stated in
Section~\ref{sec:distributed-computation}, we only need to prove the
lower bound for the number-in-hand simultaneous model; the rest
follows from Proposition~\ref{prop:reduction}.

\begin{theorem}\label{thm:lower}
  There exists a constant $c > 0$, such that for any $\eps > 0$, any
  protocol for approximating the maximum
  matching to within a factor of $c \cdot n^{\eps}$ on every graph $G$
  with $n$ vertices, in the number-in-hand simultaneous model with $k
  = n^{\eps + o(1)}$ players, has to communicate $n^{2-3\eps-o(1)}$
  bits from at least one player.
\end{theorem}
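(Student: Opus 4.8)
The plan is to prove Theorem~\ref{thm:lower} via Yao's minimax principle: exhibit a hard input distribution $\mathcal{D}$ on bipartite graphs, and show that every \emph{deterministic} simultaneous protocol that outputs a $(c\cdot n^{\eps})$-approximate matching with constant probability over $\mathcal{D}$ must have some player send $n^{2-3\eps-o(1)}$ bits; a standard averaging argument then upgrades this to the randomized worst-case statement, and Proposition~\ref{prop:reduction} carries it back to linear sketches. It is worth noting up front why $k = n^{\eps+o(1)}$ is the right number of players: by K\"onig's theorem (matching $=$ vertex cover in bipartite graphs) and subadditivity of vertex covers over a union of edge sets, for \emph{any} bipartite $G$ split among $k$ players some player owns a matching of size $\ge \opt/k$, so the trivial ``each player reports its own maximum matching'' protocol already achieves ratio $k = n^{\eps+o(1)}$ with only $\tilde O(n^{1-\eps})$ communication; the content of the theorem is that shaving this ratio down to $c\cdot n^{\eps}$ — i.e.\ recovering a matching a factor $n^{o(1)}$ larger than the trivial one — is expensive.

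The hard distribution is built from the dense \rs graph $H$ of Theorem~\ref{thm:alon-rs} on $N = n^{1-\eps-o(1)}$ vertices, with induced matchings $M_1,\dots,M_t$ of size $r = N^{1-o(1)}$ and $r\cdot t = \Theta(N^2) = \Theta(n^{2-2\eps})$. I would embed $H$ into a bipartite graph $G$ on $\Theta(n)$ vertices by replacing each vertex of $H$ with a cloud of roughly $n^{\eps}$ vertices, so that each edge of $H$ becomes a ``gadget'' between two clouds. A uniformly random index $\ell\in[t]$ is designated \emph{hidden}: the gadgets sitting on $M_\ell$ are ``live'' (they admit a large matching across the corresponding clouds), while the gadgets on the remaining $t-1$ induced matchings are ``decoys'' that are identically distributed to live gadgets marginally but — crucially, because $M_1,\dots,M_t$ are \emph{induced} matchings — cannot be stitched together into a matching exceeding the trivial $\opt/k$ bound. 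The edges of $G$ are split among the $k$ players so that each player's marginal input is independent of $\ell$ (no player ``knows'' which gadgets are live), with a single designated player carrying the information about $\ell$; one symmetrizes by a random relabeling so that all players look alike. By construction $\opt(G) = n^{1-o(1)}$, every player's local maximum matching is $n^{1-\eps-o(1)} = \Theta(\opt/k)$, and a matching that uses the live gadgets has size $\Omega(\opt)$; hence a $(c\cdot n^{\eps})$-approximation forces the coordinator to locate $\Omega(\opt/n^{\eps}) = n^{1-\eps-o(1)}$ worth of matching edges that lie inside live gadgets.

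The heart of the argument is a combinatorial lemma: any matching of $G$ exceeding the trivial bound by the required factor must ``reveal'' a constant fraction of the structure of $M_\ell$. Concretely, because the decoy matchings are induced, such a matching cannot route more than a negligible number of edges through decoy gadgets, so it charges $\Omega(n^{1-2\eps-o(1)})$ of its edges to live gadgets and thereby pins down $\Omega(n^{1-2\eps-o(1)})$ of the edges of $M_\ell$. I would then feed this into an information-theoretic (or direct counting) argument: since $\ell$ is hidden from the players that collectively hold $H$, and the protocol is simultaneous, the coordinator's ability to recover $\Omega(n^{1-2\eps-o(1)})$ edges of $M_\ell$ \emph{for a uniformly random} $\ell$ forces the total communication from those players to be $\Omega(|E(H)|) = \Omega(n^{2-2\eps-o(1)})$ bits — intuitively, a short transcript cannot encode enough of \emph{every} $M_j$ to survive a random choice of $\ell$, and the induced structure forbids reconstructing $M_\ell$ indirectly from edges of other matchings. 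Sharing this $\Omega(n^{2-2\eps-o(1)})$-bit burden among $k = n^{\eps+o(1)}$ players forces some player to send $n^{2-3\eps-o(1)}$ bits.

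The main obstacle I anticipate is precisely this combinatorial-cum-information core. One must (i) choose the gadgets and cloud sizes so that a \emph{slightly}-better-than-trivial matching \emph{provably} forces recovery of a live-gadget-heavy sub-matching, controlling the contribution of decoy gadgets via the induced property and union bounds over the exponentially many candidate matchings the protocol might output; and (ii) make the reduction from ``recover a $n^{-o(1)}$-fraction of a random $M_\ell$'' to ``$\Omega(|E(H)|)$ total communication'' quantitatively tight, since the protocol need not reconstruct $M_\ell$ verbatim nor reconstruct all of it — this is exactly where the \emph{dense} \rs construction (with $t = N^{1-o(1)}$, rather than the $t = n^{\Omega(1/\log\log n)}$ available to~\cite{GoelKK12,K13}) is both essential and delicate, as it is what pushes the communication bound from polylogarithmically-above-$n$ up to near-$n^{2}$. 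A secondary, purely bookkeeping, challenge is keeping all the $o(1)$ terms — in the approximation ratio $c\cdot n^{\eps}$, the player count $n^{\eps+o(1)}$, the \rs parameters $r,t$, and the final bound — mutually consistent so that the conclusion comes out cleanly as $n^{2-3\eps-o(1)}$.
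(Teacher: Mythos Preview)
Your proposal shares the right skeleton --- Yao's principle over a distribution built from the dense \rs graphs of Theorem~\ref{thm:alon-rs}, with a hidden induced-matching index that a good protocol must implicitly locate --- but the hard distribution you sketch is not the one the paper uses, and as written it has gaps that the paper's construction sidesteps.

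The paper does not blow vertices up into clouds or use gadgets. Instead, \emph{each} of the $k$ players receives its own copy of a fixed $(r,t)$-RS graph on $N=n/k$ vertices, with half the edges of every induced matching dropped independently and uniformly at random; a uniformly random index $\lambda\in[t]$ is then chosen, and a single global random permutation of $[n]$ assigns the $2r$ endpoints of $M_\lambda$ labels \emph{unique} to that player, while all remaining vertices get labels \emph{shared} across players. The private matchings across players are vertex-disjoint and yield $\opt\ge kr/2$, whereas any matching touching at most $N$ private vertices is only a $4/\alpha$-approximation (Claim~\ref{clm:apx-factor}). Because the permutation hides $\lambda$, a player cannot tell which of its $t$ matchings is private. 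The core technical step (Lemmas~\ref{lem:F-size}--\ref{lem:trivial}) is then a pure counting argument: an $s$-bit message partitions the $\binom{r}{r/2}^{t}$ possible inputs into at most $2^s$ cells, and the coordinator can only safely output edges lying in the \emph{intersection} of all graphs in the received cell; if $s\ll rt/(\alpha\log n)$ a typical cell is so large that at most a $1/(2^{\beta}\log n)$-fraction of the $t$ matchings can retain $\ge 2^{\beta}r/\alpha$ intersection edges, so for uniform $\lambda$ the expected number of recoverable private edges per player is at most $r/(2\alpha)$.

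Your construction, by contrast, has concrete obstacles. First, handing one player ``the information about $\ell$'' is fatal in the simultaneous model --- that player can send $\ell$ in $O(\log t)$ bits --- and symmetrizing by relabeling does not help unless no player's marginal determines $\ell$, in which case you have not said how $\ell$ is encoded in the instance at all. Second, if live and decoy gadgets are marginally identically distributed, you have not explained why the union of decoy gadgets fails to support a large matching in $G$; the induced-matching property constrains edges of $H$, not matchings inside blown-up clouds. Third, the information the protocol must reveal is not $\ell$ (only $\log t$ bits) but $\Omega(r/\alpha)$ specific edges of $M_\ell$, and that requires per-player input entropy on the order of $rt$ bits; your sketch does not say where that entropy comes from. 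In the paper it comes precisely from the random half-subsampling of every matching --- the $\binom{r}{r/2}^{t}$ factor that drives the cell-counting --- and this ingredient is missing from your outline.
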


Note that though we state Theorem~\ref{thm:lower} for general graphs, the reduction mentioned after
Theorem~\ref{thm:upper-informal} implies the same lower bound for bipartite graphs.

By Yao's principle, it is enough to prove the lower bound on the communication complexity of
deterministic protocols on some fixed distribution on the inputs (known to the players). We provide
the following distribution as a hard input distribution for every deterministic protocol.

\textbox{The hard input distribution \textnormal{(for any $\eps > 0$
    and any sufficiently large integer $N$)}}{
	\begin{itemize}
		\item \textbf{Parameters:} $r,t,k,n,\alpha$:
		\[
			r = N^{1-o(1)}~~~t = \frac{{N \choose 2} - o(N^2)}{r}~~~k = \left(\frac{N^{1+\eps}}{r}\right)^{1/(1-\eps)}~~~~n=k\cdot N~~~~\alpha = n^{\eps} 
		\]
		\item For each player \PS{i} ($i \in [k]$) independently, 
		\begin{enumerate}
			\item Create a set of $N$ vertices $V_i$ and construct an arbitrary $(r,t)$-RS graph over $V_i$. 
			\item Pick $\lambda \in [t]$ uniformly at random and let $V^{*}_i$ be the set of vertices matched in the induced matching $M_\lambda$.
			\item For each of the $t$ induced matchings, drop half of the edges uniformly at random.
		\end{enumerate}
		\item Pick a random permutation $\pi$ of $[n]$. For every player $\PS{i}$, let the \emph{label} of $v_j$ to be $\pi(j)$ for every $v_j \in V_i \setminus V^{*}_i$ and let 
		the label of $u_j$ to be $\pi(N+(i-2) \cdot r + j)$ for $u_j \in V^{*}_i$. Note that the vertices with the same label correspond to the same vertex in the final graph. 
		\end{itemize}
}

Several remarks are in order. First, one can easily verify the
following relation between the parameters,
\[
	k = \alpha N / r = n^{\eps}N^{o(1)} = n^{\eps + o(1)}
\]
Second, for the choice of the parameters $r,t,$ and $N$, by
Theorem~\ref{thm:alon-rs}, such an $(r,t)$-RS graph with $N$ vertices
indeed exists. Moreover, note that the vertices in $V^{*}_i$ for all
players are assigned with unique labels, while the vertices in $V_i
\setminus V^{*}_i$ are assigned with the same set of labels. Consequently, 
the final graph is a multi-graph with $n$ vertices and $O(kN^2) = O(n^{2 - \eps - o(1)})$ total
number of edges (counting the multiplicties). We now
briefly describe the intuition behind this distribution.

Each player $\PS{i}$ is given an $(r,t)$-RS graph with half of the
edges discarded uniformly at random from each of the $t$ induced
matchings.  Moreover, only a single induced matching is ``private''
and the vertices that are not incident on this matching are shared
among all players. In addition, the identities of the private matching
and shared vertices are unknown to the players. Intuitively, for any
deterministic protocol over this distribution, every player has to
send enough information for the coordinator to recover a large
fraction of the edges from every induced matching; otherwise, the
coordinator will not be guaranteed to recover a large enough
matching. We now make this intuition formal.

We say a vertex $v \in V$ is good if it belongs to some $V^*_i$ for $i \in [k]$. 
We say a matching $M$ is \emph{trivial} if the total number of good vertices matched in $M$ is at most $N$.

\begin{claim}\label{clm:apx-factor}
	Let $M^{*}$ be a \emph{maximum matching} in $G$ and $M$ be any
        \emph{trivial matching}, then
	\[
	\frac{\card{M}}{\card{M^*}} \leq \frac{4}{\alpha}
	\]
\end{claim}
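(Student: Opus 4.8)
The plan is to lower-bound $\card{M^*}$ and upper-bound $\card{M}$ separately, then take the ratio. For the upper bound on a trivial matching $M$: by definition $M$ matches at most $N$ good vertices, so at most $N$ of its edges are incident on a good vertex. Every other edge of $M$ uses only non-good vertices, i.e.\ vertices in $\bigcup_i (V_i \setminus V^*_i)$. But the non-good vertices all draw their labels from the same pool $\pi(1),\dots,\pi(N)$ of only $N$ distinct labels (this is exactly the point of the labelling: vertices with the same label are identified in $G$). Hence the non-good part of the vertex set of $G$ has only $N$ vertices, so any matching can use at most $N/2$ edges among them. Therefore $\card{M} \le N + N/2 \le 2N$, and in fact the cleaner bound $\card{M} \le N + N = 2N$ suffices for the constant $4$.

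For the lower bound on $\card{M^*}$: I would exhibit a large matching in $G$ explicitly. Fix any player $\PS{i}$; its private induced matching $M_\lambda$ has $r$ edges, and after the random dropping step half of them survive in expectation — but for a deterministic statement I should instead note that each player $\PS{i}$ contributes its surviving private-matching edges, and these edges lie on the vertex set $V^*_i$, whose labels are unique to player $i$ and disjoint from all other players' good vertices and from the non-good pool. So the surviving private-matching edges across all $k$ players are vertex-disjoint in $G$, giving a matching of size $\sum_i (\text{survivors of } M_\lambda \text{ for player } i)$. Since each $M_\lambda$ started with $r$ edges and we dropped half, each player contributes $r/2$ surviving edges, so $\card{M^*} \ge kr/2$. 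Now plug in the parameters: $kr = \alpha N = n^\eps N$, so $\card{M^*} \ge n^\eps N / 2 = \alpha N/2$. Combining, $\card{M}/\card{M^*} \le 2N/(\alpha N/2) = 4/\alpha$.

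The one subtlety — and the main thing to be careful about — is the ``half the edges dropped'' step interacting with the deterministic/worst-case framing: I want $\card{M^*}$ to be large with certainty (or at least on the support of the distribution), so I should phrase the private-matching contribution as: at least $\lfloor r/2\rfloor$ edges of $M_\lambda$ survive for each player by construction (the dropping rule keeps exactly half), hence the bound $\card{M^*} \ge k\lfloor r/2 \rfloor \ge kr/3 \ge \alpha N/3$ holds on every input in the support, and then the ratio is at most $2N/(\alpha N/3) = 6/\alpha$; adjusting the earlier trivial-matching bound to $\card{M}\le \tfrac32 N$ recovers the stated $4/\alpha$. The routine part is just tracking these constants; the conceptual content is entirely in (a) the label pool argument bounding non-good vertices by $N$, and (b) the label-uniqueness of the $V^*_i$'s making the private matchings vertex-disjoint in $G$.
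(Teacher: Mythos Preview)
Your proposal is correct and takes essentially the same approach as the paper: upper-bound $\card{M}\le 2N$ by splitting into edges touching a good vertex (at most $N$, since $M$ is trivial) versus edges among the at most $N$ shared non-good labels, and lower-bound $\card{M^*}\ge kr/2$ via the vertex-disjoint union of the surviving private matchings, then use $k=\alpha N/r$. Your ``subtlety'' paragraph is unnecessary: the hard distribution drops \emph{exactly} half the edges of each induced matching (note $\card{\FG_i}=\binom{r}{r/2}^t$), so each player deterministically contributes $r/2$ private-matching edges and $\card{M^*}\ge kr/2$ holds on every input in the support---no hedging to $kr/3$ or $6/\alpha$ is needed.
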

\begin{proof}
  Since $M^{*}$ is a maximum matching, it contains at least $\frac{k
    \cdot r}{2}$ edges (just using the induced matching between the good
  vertices of each player). On the other hand, since $M$ is a trivial
  matching, its size is at most the number of vertices shared by all
  players plus the number of good vertices matched in $M$, which is at
  most $2N$. Since $k = \alpha N / r$,
	\[
	\frac{\card{M}}{\card{M^*}} \leq  \frac{2N}{k \cdot r /2} 
	\leq \frac{2N}{\alpha N / 2} \leq {4 \over \alpha} 
	\]	
\end{proof}

Our goal is to prove that in any protocol that each player transmits a ``small-size'' message, the expected number of good vertices matched by the final matching is small.
In other words, the coordinator would only be able to recover a trivial matching.  

Recall that $G_i(V,E_i)$ is the graph given to the player $\PS{i}$ and
$V=[n]$. With a slight abuse of notation, we refer to the induced
subgraph of $G_i$ that is obtained by removing all isolated vertices
as the graph $G_i$ itself, since, this graph is effectively the real
input to the player $\PS{i}$. Moreover, note that picking the
permutation $\pi$ ensures that the labels of the vertices in $G_i$ are
chosen uniformly at random from $[n]$ and hence revealing no extra
information to the player $\PS{i}$.  Let $\FG_i$ be the set of all
possible graphs that $G_i$ can be. Since the edges of $G_i$ are
obtained through dropping half of the edges uniformly at random from
each induced matching of an $(r,t)$-RS graph, $\card{\FG_i} = {{r
    \choose \frac{r}{2}}}^{t}$. Moreover, in the input distribution,
$G_i$ is chosen from $\FG_i$ uniformly at random. 

For any subset $F \subseteq \FG_i$, we define the graph $G_F$ as the
\emph{intersection graph} of all graphs in $F$, i.e., an edge belongs
to the graph $G_F$ iff it belongs to every graph in $F$.

\begin{lemma}\label{lem:F-size}
  For any $i \in [k]$, any subset $F \subseteq \FG_i$, and any integer
  $\beta \geq 0$, let $I_\beta \subseteq [t]$ be the set of indices
  such that for any $j \in I_\beta$, $G_F$ contains at least
  $\frac{2^{\beta} \cdot r}{\alpha}$ edges from the $j$-th induced
  matching; if $\card{F} \geq 2^{(-\frac{r.t}{4\alpha\cdot \log{n}})}
  \card{\FG_i}$, then $\card{I_\beta} \leq
  \frac{t}{2^{\beta+2}\log{n}}$.
\end{lemma}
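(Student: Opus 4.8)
The plan is a direct counting (compression) argument on $\FG_i$. First I would recall that a graph in $\FG_i$ is specified by independently choosing, for each of the $t$ induced matchings $M_1,\dots,M_t$ of the underlying $(r,t)$-RS graph, which $r/2$ of its $r$ edges to keep; hence $\card{\FG_i} = \binom{r}{r/2}^{t}$. Now fix $F \subseteq \FG_i$ and, for each $j \in [t]$, let $c_j$ be the number of edges of $M_j$ that survive in the intersection graph $G_F$, i.e.\ the number of edges of $M_j$ common to \emph{all} graphs in $F$. Every graph in $F$ must contain these $c_j$ edges and then extend them by choosing the remaining $r/2 - c_j$ kept edges of $M_j$ from among the other $r - c_j$ edges; since these choices are independent across $j$, this gives $\card{F} \le \prod_{j=1}^{t}\binom{r-c_j}{\,r/2-c_j\,}$.

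Next I would control the per-matching shrinkage. Writing $\binom{r-c}{\,r/2-c\,}/\binom{r}{r/2}$ as the telescoping product $\prod_{\ell=0}^{c-1}\frac{r/2-\ell}{r-\ell}$ and observing that every factor is at most $1/2$ (equivalently $r/2-\ell \le (r-\ell)/2$ for $\ell \ge 0$), we obtain $\binom{r-c}{\,r/2-c\,}/\binom{r}{r/2} \le 2^{-c}$, and therefore $\card{F}/\card{\FG_i} \le 2^{-\sum_{j=1}^{t} c_j}$.

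Combining this with the hypothesis $\card{F} \ge 2^{-rt/(4\alpha\log n)}\,\card{\FG_i}$ gives $\sum_{j=1}^{t} c_j \le \frac{rt}{4\alpha\log n}$. To finish, note that by definition each $j \in I_\beta$ satisfies $c_j \ge 2^{\beta} r/\alpha$, so $\card{I_\beta}\cdot\frac{2^{\beta} r}{\alpha} \le \sum_{j\in I_\beta} c_j \le \sum_{j=1}^{t} c_j \le \frac{rt}{4\alpha\log n}$, which rearranges exactly to $\card{I_\beta} \le \frac{t}{2^{\beta+2}\log n}$.

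I do not expect a serious obstacle: the only real content is the structural observation that the number of graphs in $\FG_i$ consistent with a prescribed intersection pattern factorizes over the $t$ matchings, together with the clean bound $2^{-c}$ on how much fixing $c$ edges of one matching shrinks the count; everything else is bookkeeping. The one point to be slightly careful about is the edge case $2^{\beta} r/\alpha > r/2$, in which $I_\beta$ is forced to be empty and the claimed inequality holds trivially.
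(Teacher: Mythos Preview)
Your proposal is correct and follows essentially the same counting argument as the paper. The paper bounds $\card{F} \le \binom{r-2^\beta r/\alpha}{\,r/2\,}^{\gamma}\binom{r}{r/2}^{t-\gamma}$ (using only the indices in $I_\beta$) and then applies the same inequality $\binom{r-c}{r/2}\le 2^{-c}\binom{r}{r/2}$ that you derive via telescoping---note your $\binom{r-c}{r/2-c}$ equals the paper's $\binom{r-c}{r/2}$ by symmetry---to reach the contradiction; your version simply sums over all $c_j$ first and then restricts to $I_\beta$, which is the same computation unwound.
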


\begin{proof} 
	Let $\gamma = \card{I_\beta}$; we can upper bound the size of $F$ as follows: 
	\begin{alignat}{2}
	\card{F} &\leq {{r - \frac{2^\beta r}{\alpha}}\choose{\frac{r}{2}}}^{\gamma} \cdot {{r}\choose{\frac{r}{2}}}^{t - \gamma} \leq 
	\left(2^{-\frac{2^\beta r}{\alpha}}\cdot {{r}\choose{\frac{r}{2}}}\right)^{\gamma} \cdot {{r}\choose{\frac{r}{2}}}^{t - \gamma} \notag \\
	&= 2^{-\frac{2^\beta r\gamma}{\alpha}} \cdot {{r}\choose{\frac{r}{2}}}^{t} 
	= 2^{-\frac{2^\beta r \gamma}{\alpha}} \card{\FG_i} \notag
	\end{alignat}
	Therefore, $\gamma > \frac{t}{2^{\beta + 2}\log{n}}$ implies $\card{F} < 2^{(-\frac{r \cdot t}{4\alpha\cdot \log{n}})} \card{\FG_i}$; a contradiction.
\end{proof}

\begin{lemma}\label{lem:trivial}
	Suppose for each $i \in [k]$, the player $\PS{i}$ sends a message of size at most 
	\[
	s = \frac{r\cdot t}{5\alpha \cdot \log{n}}
	\] 
	bits to the coordinator; then, the expected number of good vertices that are matched in the matching computed by the coordinator is 
	at most $N/2$.   
\end{lemma}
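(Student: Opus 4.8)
The plan is to charge the matched good vertices to individual players. By linearity of expectation it suffices to bound, for each player $\PS{i}$, the expected number of vertices of $V^{*}_i$ that the coordinator's matching $M$ matches. Since every vertex of $V^{*}_i$ receives a label that is unique to player $\PS{i}$, such a vertex can be matched in $M$ only by an edge both of whose endpoints lie in $V^{*}_i$, and — because $M_\lambda$ is an \emph{induced} matching — any such edge must be one of the surviving edges of the private matching $M_\lambda$ of $G_i$. Thus the matched good vertices of $\PS{i}$ come in such pairs, and it is enough to show that each player contributes at most $N/(2k) = r/(2\alpha)$ matched good vertices; summing over the $k = \alpha N / r$ players then gives the claimed bound $N/2$.

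Fix $\PS{i}$ and condition on the message $m$ it sends; this partitions $\FG_i$ into at most $2^{s}$ classes, and let $F = F(m) \subseteq \FG_i$ be the class containing the realized $G_i$. A counting bound over the at most $2^{s}$ messages gives
\[
\Pr\Bigl[\,\card{F} < 2^{-rt/(4\alpha\log{n})}\card{\FG_i}\,\Bigr] \;\leq\; 2^{s}\cdot 2^{-rt/(4\alpha\log{n})} \;=\; 2^{-rt/(20\alpha\log{n})},
\]
which, for the chosen parameters ($rt = \binom{N}{2} - o(N^2)$ and $\alpha = n^{\eps}$ with $\eps < 1/2$), is super-polynomially small and hence negligible compared to $1/k$. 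So I may assume $F$ is ``large'', i.e. $\card{F} \geq 2^{-rt/(4\alpha\log{n})}\card{\FG_i}$, and absorb the remaining probability into the final additive slack.

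For a large $F$, I would invoke Lemma~\ref{lem:F-size} at the dyadic scales $\beta = 0, 1, \dots, \ceil{\log_2\alpha}$: the intersection graph $G_F$ contains at least $2^{\beta}r/\alpha$ edges of at most $t/(2^{\beta+2}\log{n})$ of the $t$ induced matchings. The crucial observation is that the only edges incident to $V^{*}_i$ that the coordinator can safely place in $M$ are those that lie in $G_F$: an edge inside $V^{*}_i$ lives only in $G_i$, so committing to one outside $G_F$ risks outputting an edge absent from $G$, and counting only genuinely present edges, this can only lower the expectation. Hence the number of matched good vertices of $\PS{i}$ is at most $2\,\card{E(G_F)\cap M_\lambda}$. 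Finally, the random permutation $\pi$ hides which of the $t$ induced matchings is the private one, so $\lambda$ behaves as a uniformly random, unidentifiable index; averaging over $\lambda$ and plugging in the dyadic bounds gives an expectation of order $(r/\alpha)\sum_{\beta} 2^{-\Theta(1)}/\log{n} = O(r/\alpha)$, and the constants in $s = rt/(5\alpha\log{n})$ and in Lemma~\ref{lem:F-size} are tuned so that this is at most $r/(4\alpha)$. Combining with linearity over the $k$ players yields $N/2$.

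I expect the main obstacle to be making precise the two claims I glossed over in the previous paragraph: (i) that the coordinator cannot do better than committing to edges of $G_F$ — i.e. that allowing it to ``gamble'' on edges outside $G_F$, at the cost of occasionally producing an invalid matching, does not increase the expected number of genuinely matched good vertices; and (ii) that, conditioned on $m$, the private index $\lambda$ is statistically indistinguishable from uniform, so the coordinator cannot concentrate its committed edges on the private matching rather than spreading them over all $t$ induced matchings. For (ii), the key facts are that $G_i$ is uniform in $F$ given $m$, that $\pi$ is independent of the edge-dropping, and that the block of labels assigned to $V^{*}_i$ is fixed independently of $\lambda$, so the labeling carries no information about $\lambda$. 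Once these are in place, the bound $2\,\card{E(G_F)\cap M_\lambda}$ on player $\PS{i}$'s matched good vertices, together with Lemma~\ref{lem:F-size} and linearity over the players, closes the argument.
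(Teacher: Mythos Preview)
Your overall plan mirrors the paper's proof: charge matched good vertices to players by linearity of expectation, use a counting argument over the at most $2^{s}$ messages to show the message class $F$ is large except with negligible probability, invoke Lemma~\ref{lem:F-size} at dyadic scales, and then use the uniformity of $\lambda$ to bound each player's expected contribution by $r/(2\alpha)$. The paper does exactly this (its Claim~\ref{clm:buckets} is your counting step, and its chain $\Pr(X_i \geq 2^{\beta}r/\alpha \mid \bar{\event_i}) \leq \Pr(\lambda \in I_\beta)$ followed by the dyadic sum is your averaging step). Your concerns (i) and (ii) are precisely the places where the paper is terse; for (ii) the paper's one-line justification is that the random permutation $\pi$ makes the player's labeled input distributionally independent of $\lambda$, and for (i) the paper simply asserts that a deterministic coordinator can only commit to edges of $G_F$.

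There is, however, a concrete error in your opening reduction. You write that a vertex of $V^{*}_i$ ``can be matched in $M$ only by an edge both of whose endpoints lie in $V^{*}_i$.'' The uniqueness of the label forces any matching edge at such a vertex to come from $G_i$, but it does \emph{not} force the other endpoint into $V^{*}_i$: in the RS graph a vertex $v \in V(M_\lambda)$ is typically incident to edges of many other induced matchings $M_j$, and for $j \neq \lambda$ the other endpoint of such an edge lies in $V_i \setminus V^{*}_i$ (indeed it must, by the induced property of $M_\lambda$). So the coordinator can in principle match a good vertex to a shared vertex via a surviving edge of some $M_j$ with $j \neq \lambda$, and your bound $X_i \leq 2\card{E(G_F) \cap M_\lambda}$ is not justified as stated. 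The paper's line $\Pr(X_i \geq 2^{\beta}r/\alpha \mid \bar{\event_i}) = \Pr(\lambda \in I_\beta)$ makes the same identification without comment, so this is as much a soft spot in the paper's write-up as in yours; one clean patch is to note that, summed over all players, the number of good vertices matched to shared vertices is at most the number of shared labels, which is below $N$ and can be absorbed into the trivial-matching threshold.
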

\begin{proof}
        Fix an index $i \in [k]$ and a player $\PS{i}$. Let $X_i$ denote the
        random variable counting the number of good vertices that are
        matched by the coordinator from the graph provided to the player
        $\PS{i}$. In the following, we prove that
        \begin{alignat}{2}
        \Ex[X_i] \leq \frac{r}{2\alpha} \label{eq:exp}
        \end{alignat}
        Having this, for $X=\sum_{i \in [k]}X_i$, by linearity of expectation, we have
        $\Ex[X] \leq kr/2\alpha = N/2$, implying that the expected number of good vertices matched by the coordinator is at most $N/2$.  
        
        Suppose the coordinator knows all inputs to the players except for
        player $\PS{i}$, i.e., the graph $G_i$. Note that this is the maximum information the coordinator can obtain from other
        players. Define $\phi_i: \FG_i \mapsto \set{0,1}^{s}$ as the deterministic mapping used by the player $\PS{i}$ to map the
        input graph to a $s$-bit message and send it to the coordinator. 
        Define the function $\Gamma_i: \FG_i \mapsto
        2^{\FG_i}$ such that for any $G \in \FG_i$, $\Gamma_i(G) = \set{H \in \FG_i~|~ \phi_i(G) = \phi_i(H)}$.

	The important observation is that since the protocol is deterministic, the coordinator can output an edge $e \in G_i$ as a matching edge for the player $\PS{i}$, only if $e$ is part of \emph{every} graph 
	in $\Gamma_i(G_i)$. We define $\event_i$ to be the event that for the graph $G_i$, $\card{\Gamma_i(G_i)} < 2^{(-\frac{r.t}{4\alpha\cdot \log{n}})} \card{\FG_i}$.
	
	The following claim can be proven using a simple counting
        argument (see Appendix~\ref{app:buckets} for a proof).
        \begin{claim}\label{clm:buckets}
          For any $i \in [k]$, $\Pr(\event_i) < \frac{1}{n}$.
        \end{claim}

         We can write the expected value of $X_i$ as,
	 \begin{alignat}{2}
	\Ex[X_i] = \Ex[X_i~|~\event_i] \cdot \Pr(\event_i) + \Ex[X_i~|~\bar{\event_i}] \cdot (1-\Pr(\event_i)) \label{eq:two-term}
	\end{alignat}
	
	By Claim~\ref{clm:buckets}, the first term in this equation is less than $1$. For simplicity, we neglect this additive value of $1$ in Equation (\ref{eq:two-term}). We now bound the second term. 
	We have: 
	\begin{alignat}{2}
	\Ex[X_i~|~\bar{\event_i}] = \sum_{j = 1}^{n} j \cdot \Pr(X_i = j~|~\bar{\event_i}) \leq \sum_{\beta =1}^{\log{n}} \frac{2^{\beta+1}r}{\alpha} \cdot \Pr(X_i \geq \frac{2^{\beta}r}{\alpha}~|~\bar{\event_i}) \label{eq:pr1}
	\end{alignat}

        We can now compute $\Pr(X_i \geq  \frac{2^{\beta}r}{\alpha}~|~\bar{\event_i})$ for any $\beta
        \geq 0$ as follows. Let $F = \Gamma_i(G_i)$; the event
        $\bar{\event_i}$ implies that $\card{F} \ge   2^{(-\frac{r.t}{4\alpha\cdot \log{n}})} \card{\FG_i}$.  By Lemma~\ref{lem:F-size}, for $I_\beta$ defined as in the lemma
        statement, $\card{I_\beta} \leq \frac{t}{2^{\beta+2}\log{n}}$. In the input distribution, $\lambda$ is chosen from $[t]$ uniformly at random. Therefore,
        the probability that $\lambda \in I_\beta$ is at most $\frac{1}{2^{\beta+2}\log{n}}$.  Hence,
	\begin{alignat}{2}
	\Pr(X_i \geq \frac{2^{\beta}r}{\alpha}~|~\bar{\event_i}) = \Pr(\lambda \in I_\beta) \leq \frac{1}{2^{\beta+2}\log{n}} \label{eq:pr2}
	\end{alignat}
	By plugging in inequality (\ref{eq:pr2}) in (\ref{eq:pr1}) 
	we obtain,
	 \[
	 \Ex[X_i~|~\bar{\event_i}] \leq \sum_{\beta =1}^{\log{n}} \frac{2^{\beta+1}r}{\alpha} \cdot \frac{1}{2^{\beta+2}\log{n}} = \sum_{\beta =1}^{\log{n}} \frac{r}{2\alpha \log{n}} = \frac{r}{2\alpha}
	 \]
	Consequently, we proved the inequality (\ref{eq:exp}), i.e, $\Ex[X_i] \leq \frac{r}{2\alpha}$.  
\end{proof}

\begin{proof}(Theorem~\ref{thm:lower})
	By Lemma~\ref{lem:trivial}, if no player communicates a message of size $\Omega(\frac{r \cdot t}{\alpha \cdot \log{n}})$ bits, then the expected number of good vertices matched in the matching output 
	by the coordinator is $N/2$ and hence by Markov bound the output matching is a trivial matching with probability $1/2$. By Claim~\ref{clm:apx-factor}, any trivial matching is at most 
	an $(\alpha/4)$-approximation to the maximum matching.  
	
	Since $\alpha = n^{\eps}$, $k = n^{\eps + o(1)}$, $N = n/k$,
        and $r\cdot t = \Omega(N^2)$ (by Theorem~\ref{thm:alon-rs}),
        we have that any simultaneous protocol that obtains a better
        than $(n^{\eps}/4)$-approximation to the maximum matching with
        constant probability, has to communicate
        $n^{2-3\eps - o(1)}$ bits from at least one player.
        
\end{proof}

\section{Conclusions}\label{sec:conc}
In this paper, we resolved the space complexity of linear sketches for
approximating the maximum matching problem in dynamic graph
streams. In particular, for approximating the maximum matching to
within a factor of $n^{\eps}$, we proved that the space of $n^{2-3\eps
  \pm o(1)}$ bits is sufficient and necessary for every single-pass
streaming algorithm that only maintains a linear sketch of the
stream. 

Our result suggests that to achieve better upper bound for the maximum
matching problem, a new set of techniques is required. Alternatively,
it might be the case that any algorithm for dynamic graph streams can
be implemented as a linear sketch (similar to the equivalence between
linear sketches and single-pass turnstile
algorithms~\cite{LNW14}). As noted earlier, to the best of our knowledge,
every known single-pass streaming algorithm for the general dynamic
graph streams is indeed of this form (i.e., only maintains a linear
sketch).  In that case, our bounds would characterize the power of
\emph{any} single-pass streaming algorithm for the maximum matching
problem in dynamic graph streams.

\subsection*{Acknowledgments} 
We would like to thank Michael Kapralov and David Woodruff for helpful discussions.

\bibliographystyle{acm}
\bibliography{general}

\clearpage
\appendix
\section{Omitted Proofs}\label{app:omitted}

\subsection{Omitted proofs from Lemma~\ref{lem:large-spanning}}\label{app:full-ball}
\begin{claim*}
  Suppose we assign $x$ balls to $y$ bins independently and uniformly
  at random. With probability at least $1/2$, the number of non-empty
  bins is at least $\min\set{x,y}/3$.
\end{claim*}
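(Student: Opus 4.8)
The plan is to prove the balls-and-bins claim by a direct expectation computation on the number of \emph{empty} bins, followed by a concentration argument (or a second-moment / Markov-type bound) to turn the expectation statement into a high-probability statement. Let $x$ be the number of balls and $y$ the number of bins. For a fixed bin, the probability it is empty after all $x$ balls are thrown independently and uniformly is $(1-1/y)^x$, so the expected number of empty bins is $\mu_{\mathrm{empty}} = y(1-1/y)^x$, and hence the expected number of non-empty bins is $y - y(1-1/y)^x = y\left(1-(1-1/y)^x\right)$. The first step is to lower-bound this quantity by $\min\{x,y\}/2$ (a bit of slack below the stated $1/3$, which we will need for the concentration step). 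Using $1-(1-1/y)^x \ge 1 - e^{-x/y}$ together with the elementary inequality $1-e^{-z} \ge z/2$ valid for $z \in [0,1]$ handles the case $x \le y$; the case $x \ge y$ is even easier since then $1-(1-1/y)^x \ge 1-e^{-1} > 1/2$, and $\min\{x,y\}=y$.

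Next I would upgrade this to a probability bound. The cleanest route is to observe that the indicator variables $Z_j = \mathbf{1}[\text{bin } j \text{ empty}]$ are \emph{negatively associated} (this is a standard fact for the classical balls-into-bins process), so the number of empty bins $Z = \sum_j Z_j$ is at least as concentrated as if the $Z_j$ were independent; in particular $\mathrm{Var}(Z) \le \Ex[Z] \le \mu_{\mathrm{empty}}$, and a Chebyshev bound gives $Z \le \Ex[Z] + O(\sqrt{\Ex[Z]})$ with probability at least $1/2$. Combined with the lower bound on the expected number of non-empty bins, this yields that with probability at least $1/2$ the number of non-empty bins is at least $\min\{x,y\}/2 - O(\sqrt{\min\{x,y\}}) \ge \min\{x,y\}/3$, where the last inequality holds once $\min\{x,y\}$ exceeds a modest absolute constant (the boundary cases of tiny $\min\{x,y\}$ can be checked by hand, or absorbed by noting the claim is only used with $\min\{x,y\}$ polynomially large in $n$). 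Alternatively, since we only need a constant-probability statement and not a sharp one, one can avoid negative association entirely and just apply Markov's inequality to $Z$: $\Pr[Z \ge 2\,\Ex[Z]] \le 1/2$, so with probability at least $1/2$ the number of non-empty bins is at least $y - 2y(1-1/y)^x$, and then verify $y - 2y(1-1/y)^x \ge \min\{x,y\}/3$ by the same elementary inequalities (using $1-(1-1/y)^x \ge (1-1/y)^{x-1}\cdot x/y$ or a direct estimate), possibly tightening the constant in the expectation bound to $5/6$ rather than $1/2$ to leave room for the factor of $2$ loss.

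The main obstacle is purely the bookkeeping of constants: getting from ``expected number of non-empty bins $\ge c\cdot\min\{x,y\}$'' to ``with probability $\ge 1/2$, at least $\min\{x,y\}/3$ non-empty bins'' forces the expectation constant $c$ to be comfortably larger than $1/3$ (at least $2/3$ if one uses Markov on the empty count, or $1/3 + o(1)$ if one uses concentration), and one must double-check that the elementary inequalities $1-e^{-z}\ge z/2$ and $(1-1/y)^x \le e^{-x/y}$ are applied in the right direction and on the right range. Once the target expectation constant is fixed, I would split into the two regimes $x \le y$ and $x > y$ as above, discharge each with a one-line estimate, and then conclude with a single Markov (or Chebyshev) inequality; no deep idea is required, only care that the chain of inequalities closes with the stated constant $1/3$.
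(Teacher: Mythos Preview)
Your overall plan matches the paper's: compute the expected number of empty bins, case-split on the ratio $x/y$, and upgrade to a probability statement via Markov in one regime and concentration (from negative correlation) in the other. Two concrete points where your sketch does not close, though:

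First, your Markov-only route (Option B) fails near $x = y$. There $(1-1/y)^x \to e^{-1}$, so $y - 2y(1-1/y)^x \to y(1 - 2e^{-1}) \approx 0.264\,y < y/3$, and no ``tightening of the expectation constant'' rescues this, since $y(1-e^{-1}) \approx 0.632\,y$ is the \emph{actual} expected number of non-empty bins. The paper's fix is exactly to shift the split point from $x/y = 1$ to $x/y = 1.5$: then $e^{-x/y} \le e^{-1.5} < 1/4$, so Markov on the empty count gives at most $y/2$ empty bins with probability $\ge 1/2$, which is enough. The complementary regime $x/y < 1.5$ still needs a concentration argument; it cannot be handled by Markov alone.

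Second, in your Chebyshev route (Option A) the intermediate bound $\mathrm{Var}(Z) \le \Ex[Z] = yp$ is too weak when $x \ll y$: it only gives $\mathrm{Var}(Z) \lesssim y$, whereas to conclude ``non-empty $\ge \min\{x,y\}/2 - O(\sqrt{\min\{x,y\}})$'' you need $\mathrm{Var}(Z) \lesssim x$. The sharper (and still elementary) bound is $\mathrm{Var}(Z) \le \sum_j p(1-p) = yp(1-p)$; since $p = (1-1/y)^x \ge 1 - x/y$ when $x \le y$, this is at most $x$, and then your Chebyshev step does go through for $\min\{x,y\}$ exceeding an absolute constant. The paper instead invokes a Chernoff bound for negatively correlated indicators in this regime, but your Chebyshev argument, with this corrected variance estimate, is a valid alternative.
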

\begin{proof}
  For each bin, the probability that the bin is empty is at most,
  \[
  (1 - {1 \over y})^x \le e^{-{x \over y}}
  \]
  
  We consider two cases. If $x/y \ge 1.5$,
  \[
   e^{-{x \over y}} \le {e^{-1.5}} < {1 \over 4}
  \]
  Hence the expected number of empty bins is at most $y/4$, and by
  Markov inequality, with probability at least $1/2$, the number of
  empty bins is at most $y/2$.

  If $x/y < 1.5$, since $e^{-z} \le 1 - z/2$ for $z \in [0, 1.5]$,
  \[
  e^{-{x \over y}} \le 1 - {x \over 2y}
  \]
  Hence the probability that a bin is non-empty is at least $x/(2y)$,
  and the expected number of non-empty bins is at least $x/2$. Since a
  bin being non-empty is negatively correlated with other bins being
  non-empty, by the extended Chernoff bound, with probability at least
  $1 - e^{-\Omega(x)}$, the number of non-empty bins is at least $x/3$.

  Hence over all, the number of non-empty bins is at least
  $\min\set{x,y}/3$ with probability at least $1/2$.
\end{proof}

\subsection{Omitted proofs from Lemma~\ref{lem:trivial}}\label{app:buckets}
\begin{claim*} For any $i \in [k]$, $\Pr(\event_i) < \frac{1}{n}$.
\end{claim*}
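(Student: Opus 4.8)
The plan is a straightforward counting (pigeonhole) argument over the message buckets; no clever idea is needed beyond what is already set up. Recall that $\phi_i : \FG_i \to \set{0,1}^s$ is the deterministic encoding used by player $\PS{i}$, so it takes at most $2^s$ distinct values, and the sets $\Gamma_i(G) = \phi_i^{-1}(\phi_i(G))$ partition $\FG_i$ into at most $2^s$ classes. Call a class \emph{small} if it has fewer than $K := 2^{-rt/(4\alpha\log{n})}\card{\FG_i}$ elements; the event $\event_i$ is exactly the event that $G_i$ falls into a small class.

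First I would bound the total number of graphs that lie in small classes: there are at most $2^s$ classes, and each small one contributes fewer than $K$ graphs, so at most $2^s\cdot K$ graphs of $\FG_i$ belong to some small class. In the hard distribution $G_i$ is drawn uniformly from $\FG_i$ — dropping half of the edges of each of the $t$ induced matchings independently and uniformly is precisely the uniform distribution on $\FG_i$, whose size is ${r\choose r/2}^t$. Hence
\[
\Pr(\event_i)\ \le\ \frac{2^s\cdot K}{\card{\FG_i}}\ =\ 2^{\,s-\frac{rt}{4\alpha\log{n}}}.
\]
Substituting $s=\frac{rt}{5\alpha\log{n}}$ gives $\Pr(\event_i)\le 2^{-\frac{rt}{20\alpha\log{n}}}$.

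It then remains to verify $2^{-rt/(20\alpha\log{n})}<\frac1n$, i.e. $rt>20\alpha\log^2{n}$. By Theorem~\ref{thm:alon-rs} we have $r\cdot t=\Omega(N^2)$, and from the parameter choices $N=n/k$ with $k=n^{\eps+o(1)}$ and $\alpha=n^{\eps}$, so $rt=\Omega(n^{2-2\eps-o(1)})$ while $20\alpha\log^2{n}=n^{\eps}\cdot\polylog(n)$; for $\eps<2/3$ this holds for all sufficiently large $n$ (and for $\eps\ge 2/3$ the lower bound $n^{2-3\eps-o(1)}=n^{o(1)}$ is vacuous), completing the proof. The only mildly delicate point — and the reason $s$ was defined with the constant $\frac15$ rather than $\frac14$ — is keeping a constant-fraction slack in the exponent $s-\frac{rt}{4\alpha\log{n}}$ so that the resulting probability bound actually beats $1/n$; there is no genuine obstacle.
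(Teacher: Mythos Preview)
Your proof is correct and follows essentially the same counting argument as the paper: both observe that $\phi_i$ partitions $\FG_i$ into at most $2^s$ classes, bound the total mass of ``small'' (the paper says ``light'') classes by $2^{s-rt/(4\alpha\log n)}$, and conclude $\Pr(\event_i)<1/n$. You add the explicit verification that $2^{-rt/(20\alpha\log n)}<1/n$ follows from $rt=\Omega(N^2)=n^{2-2\eps-o(1)}\gg 20\alpha\log^2 n$, which the paper leaves implicit.
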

\begin{proof}
  Let $o \in \set{0,1}^{s}$ be the output of the function $\phi_i$,
  and with slight abuse of notation, we let $\Gamma_i(o) =
  \Gamma_i(G)$ for some $G$ such that $\phi_i(G) = o$. We say $o$ is
  \emph{light} iff $\card{\Gamma_i(o)} < 2^{(-\frac{r.t}{4\alpha\cdot
      \log{n}})} \card{\FG_i}$. We have
  \begin{alignat}{2}
    \Pr(\event_i) &= \sum_{o \in \set{0,1}^{s}\text{: $o$ is light}} \Pr_{G \sim \FG_i}(\phi_i(G)=o) \notag \\
    &= \sum_{o \in \set{0,1}^{s}\text{: $o$ is light}} \frac{\card{\Gamma_i(o)}}{\card{\FG_i}} \notag \\
    &\leq 2^{s-\frac{r.t}{4\alpha\cdot \log{n}}} < \frac{1}{n} \notag
  \end{alignat}
\end{proof}

\end{document}